\newcommand{\R}{{\mathbb R}}
\newcommand{\eps}{\varepsilon}
\theoremstyle{plain}
\newtheorem{theorem}{Theorem}[section]
\newtheorem{lemma}[theorem]{Lemma}
\newtheorem{proposition}[theorem]{Proposition}
\theoremstyle{definition}
\theoremstyle{remark}
\newtheorem{remark}{Remark}
\theoremstyle{assumption}
\newtheorem{assumption}{\bf Assumption}
\begin{document}

\title{One-Dimensional Lieb--Oxford Bounds} 

\author{Andre Laestadius}
\email[]{andre.laestadius@kjemi.uio.no}
\affiliation{Hylleraas Centre for Quantum Molecular Sciences, Department of Chemistry, University of Oslo, P.O. Box 1033 Blindern, N-0315 Oslo, Norway}

\author{Fabian M. Faulstich}
\affiliation{Hylleraas Centre for Quantum Molecular Sciences, Department of Chemistry, University of Oslo, P.O. Box 1033 Blindern, N-0315 Oslo, Norway}

\date{\today}

\begin{abstract}

We investigate and prove Lieb--Oxford bounds in one dimension by studying convex potentials that approximate the ill-defined Coulomb potential. 
A Lieb--Oxford inequality establishes a bound of the indirect interaction energy for electrons 
in terms of the one-body particle density $\rho_\psi$ of a wave function~$\psi$. Our results include modified soft Coulomb potential and regularized Coulomb potential. For these potentials,
we establish Lieb--Oxford-type bounds utilizing logarithmic expressions of the particle density. 
Furthermore, a previous conjectured form $I_\mathrm{xc}(\psi)\geq - C_1 \int_{\mathbb R} \rho_\psi(x)^{2} \mathrm{d}x$ is discussed for different convex potentials.

\end{abstract}

\maketitle

\section{Introduction}

Kohn--Sham density-functional theory \cite{kohn1965self} is due to its simplicity and wide ranging applicability today's workhorse of quantum many-body calculations. 
The general formulation of density-functional theory uses the one-body particle density~$\rho$, which can be computed from an antisymmetric wave function $\psi$ describing a fermionic many-body system.
Let $X$ be the space of admissible densities and let $\langle \cdot,\cdot \rangle_{X^*,X}$ be the dual pairing. 
Then the energy corresponding to $\rho$ can be expressed by
\begin{equation} \label{eq:dft}
\begin{aligned}
 E(\rho)
&=
\inf_{\psi \mapsto \rho} 
\langle \psi | \hat T +  \hat V | \psi \rangle + \langle v_\mathrm{ext},\rho \rangle_{X^*,X} \\
&= F(\rho) + \langle v_\mathrm{ext},\rho \rangle_{X^*,X},
\end{aligned}
\end{equation}
where $v_\mathrm{ext}$ is the external potential (element of the topological dual space $X^*$) and $\psi\mapsto \rho$ is the  conventional notation for $\rho_\psi = \rho$ with $\rho_\psi$ being the particle density computed from $\psi$. 
Here $\hat T$ and $\hat V$ describe the kinetic energy and electron--electron repulsion, respectively.

Equation~\eqref{eq:dft} describes the transition from a formulation using wave functions to a formulation using densities and where $F(\rho)$ denotes the universal density functional. 
For almost all practical applications, one more important step is taken:  
The Kohn--Sham approach introduces a fictitious non-interacting system that has the same ground-state particle density $\rho$ as the fully-interacting system and that can be computed from a Slater determinant.
We write
\begin{align*}
    F(\rho) = T_s(\rho) + J(\rho) +E_{\rm xc}(\rho), 
\end{align*}
where $T_\mathrm{s}(\rho)$ is the noninteracting kinetic energy, $J(\rho)$ is the direct Coulomb repulsion ({\it Hartree term)}, and $E_{\rm xc}(\rho)$ is the exchange-correlation energy. 
The electronic correlation effects are incorporated into the exchange-correlation functional.
The caveat of density-functional theory is that the exact form of $E_{\rm xc}(\rho)$ is unknown.
Consequently, the development of novel approximate exchange-correlation functionals is an important and fundamental task.
One possible approach in the development of new and more generally applicable exchange-correlation functionals is by means of sound mathematical bounds (see e.g. Ref.~\citenum{perdew1996comparison}). 
A particularly useful bound is provided by the {\it Lieb--Oxford inequality}~\cite{Lieb1979,LiebOxford}; it gives a lower bound of the {\it indirect interaction energy} $I_{\rm xc}(\psi) = \langle \psi|\hat V|\psi\rangle-J(\rho_\psi)$. 
In quantum chemistry, it is extensively used as a constraint in the construction and testing of exchange-correlation functionals~\cite{perdew1996generalized,PhysRevLett.78.1396,Perdew,Levy1993,Odashima2009,haunschild2012hyper}. 
Hence, having a tight estimate for the bound is highly desirable.  
For more quantum-chemistry related works see, e.g., Refs.~\citenum{odashima2007tight} and \citenum{vela2009variable}.

The Lieb--Oxford inequality, first formulated in three dimensions~\cite{Lieb1979}, states that the indirect interaction energy for any $N$-particle wave function $\psi$ is bounded from below, viz.,
\begin{equation}
\label{eq:LiebOxford3d}
 I_\mathrm{xc}(\psi) \geq -C_3 \int_{\mathbb R^3} \rho_\psi(x)^{4/3} \mathrm{d}x,
\end{equation}
where $C_3\leq 1.68$ was established by Lieb and Oxford in Ref.~\citenum{LiebOxford}.
The constant $C_3$ was further improved by Chan and Handy~\cite{Garnet} to $C_3\leq 1.64$ and, more recently, $C_3 \geq 1.44$ was derived by Cotar and Petrache\cite{cotar2017equality}, and Lewin, Lieb and Seiringer\cite{lewin2019}.
In the two-dimensional case, a Lieb--Oxford bound has been proven by Lieb, Solovej and Yngvason\cite{Lieb1995} stating that
\begin{equation}
\label{LiebOxford2d}
I_\mathrm{xc}(\psi) \geq -C_2 \int_{\mathbb R^2} \rho_\psi(x)^{3/2} \mathrm{d}x,
\end{equation}
where $C_2 \leq 481$.
In the work of R\"as\"anen et al., an argument based on universal scaling properties was used to conjecture that in the $d$-dimensional case 
\begin{align}
\label{eq:post}
I_\mathrm{xc}(\psi) \geq -C_d \int_{\mathbb R^d} \rho_\psi(x)^{1+\frac 1 d} \mathrm{d}x.
\end{align}
Note that Eq.~\eqref{eq:post} agrees with the proven results for three and two dimensions  (Eqs.~\eqref{eq:LiebOxford3d} and~\eqref{LiebOxford2d}).
Furthermore, using the $d$-dimensional infinite homogeneous electron gas in the low-density limit, Ref.~\citenum{Rasanen} provided further improved bounds for two and three dimensions, and the conjectured one-dimensional bound
$I_\mathrm{xc}(\psi) \geq -C_1 \int_{\mathbb R} \rho_\psi^{2} \mathrm{d}x$.
In order to further improve the Lieb--Oxford bound, Benguria, Bley and Loss introduced an additional term to the right-hand side of Eq.~\eqref{eq:post} that involves the gradient of the single-particle density~\cite{benguria2012new}.
This bound was used to improve the result from Lieb, Solovej and Yngvason in two dimensions~\cite{benguria2012indirect,benguria2012new1}.
Different Lieb--Oxford bounds including density-gradient type corrections were further investigated in Ref.~\citenum{lewin2015improved}.

A crucial observation for the one-dimensional case is that the Coulomb potential $v(r) = r^{-1}$ is   
too singular (using the approach taken here, see Remark~\ref{rmk:Csing}).
Hence, a suitable interaction potential has to be chosen before defining the indirect interaction energy. 
Common examples are the {\it contact potential} (also called {\it Dirac potential}, $v=\eta \delta$), the {\it soft-Coulomb potential} and the {\it regularized Coulomb potential}~\cite{Rasanen,Rasanen2011}, and in the mathematical literature the {\it homogeneous potential} $v(r)= r^{\eps-1}$.
Without mathematical proof, the bounds $I_\mathrm{xc}(\psi) \geq -\frac{\eta}{2} \int_{\mathbb R} \rho_\psi^{2} \mathrm{d}x$ and 
$I_\mathrm{xc}(\psi) \geq -C \int_{\mathbb R} \rho_\psi^{2}\left( K_1 +  \ln\left[ K_2 /(\eps \rho_\psi)\right] \right)  \mathrm{d}x$ 
were reported in Ref.~\citenum{Rasanen} for the contact and soft Coulomb potential, respectively.
This was also confirmed by R\"as\"anen, Seidl and Gori--Giori for finite homogeneous electron gas in the strong interaction limit in Ref.~\citenum{Rasanen2011}.
In the same work, R\"as\"anen et al.~studied the regularized Coulomb potential but did not present an explicit expression for a Lieb--Oxford bound in this case.

We here present a mathematical analysis that addresses several aspects of Lieb--Oxford bounds 
for one-dimensional quantum systems.
This article is structured as follows. 
In Section~\ref{sec:II} we start with the general result by Hainzl and Seiringer~\cite{Hainzl},  which is based on a generalization of the Fefferman--de la Llave decomposition and uses the Hardy--Littlewood maximal function. 
We derive, in Lemma~\ref{th:2}, an alternative to this general bound that does not require the maximal function. This lemma is used in Theorrem~\ref{lem:Ulemma} with further restrictions on the considered potentials (Assumptiom~\ref{ass:1}). 
In Section~\ref{sec:III}, we present Lieb--Oxford bounds for 
approximate Coulomb-type potentials. 
In particular, we consider a convex version of the soft Coulomb potential and derive in Theorem~\ref{thm:csc} Lieb--Oxford bounds with logarithmic terms 
of the particle density. 
This type of terms appear in one-dimensional conductors, also called ultra-thin wires, when modelling interactions with a soft Coulomb potential~\cite{fogler2005}. 
We show that these terms are also included in a Lieb--Oxford bounds for the regularized Coulomb potential and, to the best of our knowledge, present the first explicit expression of  Lieb--Oxford inequalities for this potential.
We thus complement the analyses of R\"as\"anen et al.~by deriving  explicit expressions for different Lieb--Oxford bounds considered in Refs.~\citenum{Rasanen} and~\citenum{Rasanen2011}. 
We also address the conjectured form in Eq.~\eqref{eq:post} with $d=1$ for all here studied potentials. 
In addition, we address a Lieb--Oxford bound for the homogeneous one-dimensional Hubbard model, which finds application in the description of the Luttinger liquid and the Mott insulator~\cite{capelle2003density,lima2003density,schonhammer1995density}.
The Lieb--Oxford bound for this particular model system is derived in Appendix~A. 
The development of applications in low-dimensional physics using density-functional theory 
shows the potential importance of one-dimensional (density-functional) constraints just as in three dimensions~\cite{Gedanken}.
Moreover, one-dimensional Lieb--Oxford bounds are applicable to confined higher-dimensional systems~\cite{Hainzl} and, as noted in Ref.~\cite{Rasanen}, there is a crossover between one- and two-dimensional bounds

A.L. is thankful for useful discussions with S. Di Marino during the BIRS workshop {\it Optimal transport in DFT}. 
The support of the Norwegian Research Council through the Grant Nos. 287906 and 262695 (CoE Hylleraas Centre for Quantum Molecular Sciences), and from ERC-STG-2014 Grant Agreement No. 639508 are acknowledged. The authors are very thankful to E.~I. Tellgren for comments and suggestion that helped improve the manuscript and thanks M.~A. Csirik for useful comments. 
We furthermore thank the anonymous reviewers that, in particular, motivated us to derive a Lieb--Oxford bound within the Hubbard model given in Appendix~A.

%
%
%
%

\section{Lieb--Oxford bounds in one dimension}   
\label{sec:II}

\subsection{Prerequisite}
Throughout this article we assume that the $N$-particle wave function $\psi\in L^2((\mathbb{R}\times \{\uparrow,\downarrow \})^N)$ is normalized, i.e., that $\Vert\psi \Vert_{2}=1$ holds. 
The one-body particle density associated with a wave function $\psi$ is defined through 
\begin{equation*}
\begin{aligned}
\rho_\psi(x) = N \sum_{q_i\in \{\uparrow,\downarrow \}} \int_{\mathbb R^{N-1}} \vert \psi(x,q_1,x_2,q_2,...,x_N,q_N)  \vert^2 \\
\times \mathrm{d}x_2...\mathrm{d}x_N.
\end{aligned}
\end{equation*}
Subsequently, we furthermore assume that $\psi$ has finite kinetic energy, i.e., 
\begin{align*}
    \mathcal K(\psi) = \frac{1}{2} \sum_{q_i\in \{\uparrow,\downarrow \}}\int_{\mathbb R^N} \vert \nabla  \psi(x_1,q_1,x_2,q_2,...,x_N,q_N)\vert^2  \\
    \times \mathrm{d}x_1... \mathrm{d}x_N< +\infty. 
\end{align*}
The space of wave functions that fulfill these constraints is the Sobolev space with $L^2$-topology denoted $H^1$.
By the Hoffmann--Ostenhof inequality~\cite{Lieb1983}, $\rho_\psi^{1/2} \in H^1(\mathbb R)$ for $\psi\in H^1$.
Furthermore, if $\rho_\psi^{1/2} \in H^1(\mathbb R)$, the Sobolev inequality in one dimension (see e.g. Theorem~8.5 in Ref.~\citenum{lieb2001analysis})
implies that $\rho_\psi\in L^p(\mathbb R)$ for all $p\in [1,\infty]$.

Let the electron--electron repulsion be modelled by a potential $v:\mathbb{R}^+\rightarrow \mathbb{R}$.
The indirect interaction energy is then defined through ($N\geq 2$)
\begin{equation*}
\begin{aligned}
I_\mathrm{xc}(\psi) = \langle \psi |\sum_{i < j}v(|x_i-x_j|)|\psi\rangle  - D(\rho_\psi,\rho_\psi),  
\end{aligned}
\end{equation*}
where $D(\rho,\rho)$ denotes the {\it direct part} (i.e., the Hartree term $J(\rho) = D(\rho,\rho)$) of the interaction energy, viz.,
\begin{align*}
D(\rho,\rho) = \frac{1}{2}\int_{\mathbb R}\int_{\mathbb R} \rho(x)\rho(y)v(|x-y|)\mathrm{d}x \mathrm{d}y.
\end{align*}

\subsection{General Lieb--Oxford bounds}
We begin by introducing the {\it Hardy--Littlewood maximal function}, 
\begin{equation}\label{eq:Mf}
\begin{aligned}
(\mathcal Mf)(x) = \sup_{r>0 }\Big\{\frac{1}{2r}\int_{|x-y|<r}f(y)\mathrm{d}y \Big\}.
\end{aligned}
\end{equation}
For $1<p<+\infty$ the operator $\mathcal{M}$ is bounded in the $L^p$-topology, $\Vert \mathcal Mf \Vert_p\leq M_p \Vert f \Vert_p$, with 
\begin{equation}
\begin{aligned}
\label{eq:int-pol}
 M_p := \left( \frac{2^p2p}{p-1}\right)^{\frac{1}{p}}.
\end{aligned} 
\end{equation}

Hainzl and Seiringer proved the following general bound 
(for a proof we refer to Lemma~2 in Ref.~\citenum{Hainzl}).

\begin{lemma}
\label{thm:HSlem2}
Let $v(r)$ be convex and $\lim_{r\to \infty}v(r)= 0$. 
Then, for any nonnegative function $\gamma$ on $\mathbb R$, 
\begin{equation}
\label{eq:Hainzl-lem}
\begin{aligned}
I_\mathrm{xc}(\psi) 
\geq 
-\frac 1 2  \int_{\R} \Big(  (\mathcal M\rho_\psi)(x)^2  \int_0^{\gamma(x)} v''(r) r^2\mathrm{d}r  \\
 + (\mathcal M\rho_\psi)(x)\int_{\gamma(x)}^\infty v''(r) r\mathrm{d}r \Big) \mathrm d x. 
\end{aligned}	
\end{equation}
\end{lemma}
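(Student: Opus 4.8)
The plan is to represent the convex pair interaction as a superposition of elementary interactions attached to intervals---the one-dimensional Fefferman--de la Llave decomposition---and to estimate the indirect energy of each elementary piece separately. First I would record the representation, valid for $s\ge 0$,
\[
v(s) = \int_s^\infty (r-s)\,v''(r)\,\mathrm{d}r = \int_0^\infty (r-s)_+\,v''(r)\,\mathrm{d}r,
\]
obtained by integrating by parts twice: convexity makes $v''\ge 0$ (as a measure), while $\lim_{r\to\infty}v(r)=0$ together with the decay of $r\,v'(r)$ annihilates the boundary terms. Combining this with the geometric identity $\int_{\R}\chi_{B_{r/2}(z)}(x)\,\chi_{B_{r/2}(z)}(y)\,\mathrm{d}z = (r-|x-y|)_+$, where $B_{r/2}(z)$ is the interval of radius $r/2$ centered at $z$, yields
\[
v(|x-y|) = \int_0^\infty\!\!\int_{\R}\chi_{B_{r/2}(z)}(x)\,\chi_{B_{r/2}(z)}(y)\,v''(r)\,\mathrm{d}z\,\mathrm{d}r.
\]

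Next I would insert this decomposition into both terms defining $I_\mathrm{xc}(\psi)$. Writing $n_{z,r}$ for the number of particles in $B_{r/2}(z)$ and $m_{z,r}=\int_{B_{r/2}(z)}\rho_\psi=\langle n_{z,r}\rangle$, the interaction expectation becomes the integral of $\tfrac12\langle n_{z,r}(n_{z,r}-1)\rangle\,v''(r)$ and the Hartree term the integral of $\tfrac12 m_{z,r}^2\,v''(r)$, both against $\mathrm{d}z\,\mathrm{d}r$. Since both integrands are nonnegative, Tonelli justifies interchanging the $(z,r)$-integration with the expectation and with the double density integral, and subtracting gives
\[
I_\mathrm{xc}(\psi) = \frac12\int_0^\infty\!\!\int_{\R}\big(\mathrm{Var}(n_{z,r}) - m_{z,r}\big)\,v''(r)\,\mathrm{d}z\,\mathrm{d}r,
\]
using that the integrand equals $\tfrac12\langle n_{z,r}(n_{z,r}-1)\rangle - \tfrac12 m_{z,r}^2$.

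The core of the argument is then two elementary pointwise estimates. Discarding the nonnegative occupation term $\tfrac12\langle n_{z,r}(n_{z,r}-1)\rangle$ bounds the integrand below by $-\tfrac12 m_{z,r}^2$, whereas discarding the nonnegative variance bounds it below by $-\tfrac12 m_{z,r}$. I would apply the quadratic estimate on short intervals and the linear estimate on long ones, splitting the $r$-integral at an arbitrary position-dependent radius $\gamma(z)$. Finally, choosing in the supremum defining \eqref{eq:Mf} the interval centered at $z$ of radius $r/2$ gives $m_{z,r}\le r\,(\mathcal{M}\rho_\psi)(z)$, hence $m_{z,r}^2\le r^2(\mathcal{M}\rho_\psi)(z)^2$; inserting these two estimates and renaming $z\mapsto x$ reproduces \eqref{eq:Hainzl-lem} with the correct factor $-\tfrac12$.

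I expect the only genuine difficulty to be the rigorous justification of the decomposition and the exchange of integrals: establishing the double integration-by-parts identity when $v''$ is merely a nonnegative measure, verifying that the boundary terms vanish under the hypotheses (convexity together with $v(r)\to 0$ forcing $v'\le 0$ increasing to $0$, plus the decay $r\,v'(r)\to 0$), and applying Tonelli to the separately nonnegative interaction and direct pieces before subtracting them. Once the decomposition is in hand, the variance/occupation dichotomy and the single maximal-function estimate are immediate.
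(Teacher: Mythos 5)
Your argument is correct and is essentially the proof the paper relies on: the paper does not reprove this lemma but cites Lemma~2 of Hainzl and Seiringer, whose argument is exactly the one-dimensional Fefferman--de la Llave decomposition, the variance/occupation dichotomy, and the maximal-function estimate $m_{z,r}\le r\,(\mathcal M\rho_\psi)(z)$ that you describe. In fact, your intermediate inequality before applying the maximal function is precisely Eq.~(12) of that reference, which the paper reuses as the starting point of its proof of Lemma~\ref{th:2}.
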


\begin{remark}
    If not $\gamma(x)>0$ (almost everywhere) we get in Eq.~\eqref{eq:Hainzl-lem} a contribution $\int_\Omega (\mathcal M\rho_\psi)(x)\mathrm d x $, where 
    $\Omega =\{ x\in\R \mid \gamma(x)=0 \}$. Since there does not exist a general bound of the Hardy--Littlewood maximal function in the $L^1$-topology, unless $\gamma>0$,   Eq.~\eqref{eq:Hainzl-lem} cannot be turned into a bound in terms of~$\rho$.
\end{remark}

Assuming an appropriate $\gamma$ has been chosen in Lemma~\ref{thm:HSlem2}, when transforming the inequality in Eq.~\eqref{eq:Hainzl-lem} to be in terms of the particle density $\rho$ instead of $\mathcal M \rho$, the factor $M_p$ (see  Eq.~\eqref{eq:int-pol}) enters. 
In general, this yields suboptimal bounds. A simple illustration can be given by the choice $\gamma \equiv +\infty$. 
The bound in Eq.~\eqref{eq:Hainzl-lem} then reduces to 
\begin{equation*}
\begin{aligned}
I_\mathrm{xc}(\psi) \geq -8 \int_\mathbb{R} v''(r) r^2\mathrm{d}r \int_{\R}  \rho_\psi(x)^2  \mathrm{d}x .
\end{aligned}
\end{equation*}
If furthermore $\lim_{r\to \infty}v''(r)= \lim_{r\to \infty}v'(r)= 0$, then $\int_\mathbb{R} v''(r) r^2\mathrm{d}r = 2 \int_\mathbb{R} v \mathrm{d} r$ and we obtain
\begin{equation}
\label{eq:HSestimate1D}
I_\mathrm{xc}(\psi) \geq -16 \int_\mathbb{R} v(r) \mathrm{d}r \int_{\R}  \rho_\psi^2  \mathrm{d}x.
\end{equation}
However, by a change of coordinates and using the Cauchy--Schwarz inequality, we find
\begin{equation}
\label{eq:direct}
I_\mathrm{xc}(\psi) \geq - \int_\R v(r) \mathrm{d}r \int_{\mathbb R}  \rho_\psi^2 \mathrm{d}x.
\end{equation}
Compared to Eq.~\eqref{eq:HSestimate1D}, Eq.~\eqref{eq:direct} yields a 16 times tighter constant (since $M_2^2 = 16$).

For constant nonnegative $\gamma$, an alternative result can be established, which, compared to Lemma~\ref{thm:HSlem2}, does not use the maximal function in Eq.~\eqref{eq:Mf}.
(The improvement by Lieb and Oxford~\cite{LiebOxford} of the three-dimensional bound of the indirect interaction energy originally given by Lieb\cite{Lieb1979} included the dispense of the Hardy-Littlewood maximal function.)

\begin{lemma}
\label{th:2}
Let $v$ be convex and $\lim_{r\to \infty}v(r)= 0$. Then, for any constant $\gamma \geq 0$,
\begin{equation} 
\begin{aligned}
\label{eq:lem2}
I_\mathrm{xc}(\psi) 
&\geq 
- \frac 1 2 \int_{\R}\rho_\psi^2 \mathrm{d}x \int_0^\gamma v''(r) r^2\mathrm{d}r\\ 
&\quad - \frac 1 2 \int_\R \rho_\psi \mathrm d x \int_\gamma^\infty v''(r) r\mathrm{d}r.
\end{aligned}
\end{equation}
\end{lemma}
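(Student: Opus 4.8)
The plan is to run the one-dimensional Fefferman--de la Llave decomposition underlying Lemma~\ref{thm:HSlem2}, but to replace its pointwise maximal-function estimate by a single Cauchy--Schwarz bound that becomes available once $\gamma$ is constant. Writing $v(s)=-\int_s^\infty v'(r)\,\mathrm{d}r$ and $v'(r)=-\int_r^\infty v''(t)\,\mathrm{d}t$ and swapping the order of integration (Tonelli, legitimate since $v''\ge0$) gives the representation
\begin{equation*}
v(|x-y|)=\int_0^\infty v''(r)\,(r-|x-y|)_+\,\mathrm{d}r=\int_0^\infty v''(r)\int_{\R}\chi_r(x-z)\,\chi_r(y-z)\,\mathrm{d}z\,\mathrm{d}r,
\end{equation*}
where $\chi_r$ is the indicator of $(-r/2,r/2)$, so that $\int_{\R}\chi_r(x-z)\chi_r(y-z)\,\mathrm{d}z=(r-|x-y|)_+$. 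Here convexity together with $\lim_{r\to\infty}v(r)=0$ is what forces $v'\le0$ and $\lim_{r\to\infty}v'(r)=0$, so that both one-dimensional integrals converge, and $v''\ge0$ makes all weights nonnegative.

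I would then insert this decomposition into both $\langle\psi|\sum_{i<j}v(|x_i-x_j|)|\psi\rangle$ and $D(\rho_\psi,\rho_\psi)$. Introducing the particle number $n_{z,r}=\sum_i\chi_r(x_i-z)$ in the interval of width $r$ centred at $z$ and its average $m_{z,r}=\langle n_{z,r}\rangle=\int_{\R}\chi_r(x-z)\rho_\psi(x)\,\mathrm{d}x$, the algebraic identity $\sum_{i<j}\chi_r(x_i-z)\chi_r(x_j-z)=\tfrac12(n_{z,r}^2-n_{z,r})$ together with Tonelli yields
\begin{equation*}
I_\mathrm{xc}(\psi)=\frac12\int_0^\infty v''(r)\int_{\R}\big(\langle n_{z,r}^2\rangle-m_{z,r}-m_{z,r}^2\big)\,\mathrm{d}z\,\mathrm{d}r.
\end{equation*}
The crux is a scale-dependent lower bound on the integrand. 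Since $n_{z,r}$ is nonnegative integer valued, $\langle n_{z,r}^2\rangle\ge\langle n_{z,r}\rangle=m_{z,r}$, so the bracket is $\ge-m_{z,r}^2$; alternatively the variance inequality $\langle n_{z,r}^2\rangle\ge m_{z,r}^2$ gives the bracket $\ge-m_{z,r}$. I would use the first on $\{r<\gamma\}$ (where it is the sharper choice, as $m_{z,r}^2\le m_{z,r}$ for small intervals) and the second on $\{r>\gamma\}$.

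It then remains to carry out the $z$-integrations, and this is where constant $\gamma$ pays off by letting the $z$- and $r$-integrals separate. The linear term is exact, $\int_{\R}m_{z,r}\,\mathrm{d}z=r\int_{\R}\rho_\psi\,\mathrm{d}x$ because $\int_{\R}\chi_r(x-z)\,\mathrm{d}z=r$, producing the second term of \eqref{eq:lem2}. For the quadratic term Cauchy--Schwarz gives $m_{z,r}^2\le r\int_{\R}\chi_r(x-z)\rho_\psi(x)^2\,\mathrm{d}x$, and integrating in $z$ (Tonelli) yields $\int_{\R}m_{z,r}^2\,\mathrm{d}z\le r^2\int_{\R}\rho_\psi^2\,\mathrm{d}x$; since $v''\ge0$ this inequality survives integration against $-\tfrac12 v''(r)$ over $(0,\gamma)$ and produces the first term of \eqref{eq:lem2}. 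I expect the only genuine subtlety to lie in the decomposition step rather than in the estimates: justifying the representation and the interchanges of integration when $v$ is merely convex (so that $v''$ should be read as a nonnegative measure and Tonelli invoked through nonnegativity), whereas the constant-$\gamma$ hypothesis is precisely the ingredient that removes the Hardy--Littlewood maximal function appearing in Lemma~\ref{thm:HSlem2}.
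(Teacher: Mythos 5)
Your proposal is correct and follows essentially the same route as the paper: the paper imports the intermediate inequality (your decomposition plus the scale-dependent bounds $\langle n_{z,r}^2\rangle\geq m_{z,r}$ for $r<\gamma$ and $\langle n_{z,r}^2\rangle\geq m_{z,r}^2$ for $r>\gamma$) directly as Eq.~(12) of Hainzl--Seiringer, and then, exactly as you do, applies Cauchy--Schwarz to the quadratic term and computes the linear term exactly. The only difference is that you re-derive that cited inequality from the one-dimensional Fefferman--de la Llave representation, which you do correctly.
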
 

\begin{proof}
Following the notation in Ref.~\citenum{Hainzl}, we introduce
\begin{equation*}
\begin{aligned}
\alpha_\psi(r,z) 
= \int_{z-r}^{z+r}\rho_\psi(x)\mathrm{d}x
= \int_{\mathbb{R}}b_r(x,z)\rho_\psi(x)\mathrm{d}x,
\end{aligned}
\end{equation*}
where $b_r(x,z)$ is equal to one if $z-r < x < z+r$ and zero elsewhere. 
Using the properties of $v$, from Eq.~(12) in Ref.~\citenum{Hainzl} we obtain
\begin{equation}
\label{eq:2terms}
\begin{aligned}
- I_\mathrm{xc}(\psi) 
&\leq 
\int_{\mathbb R} \int_0^{\frac \gamma 2} v''(2r) \alpha_\psi(r,z)^2 \mathrm{d}r\mathrm{d}z \\
&\quad +  \int_{\mathbb R} \int_{\frac \gamma 2}^\infty v''(2r) \alpha_\psi(r,z) \mathrm{d}r\mathrm{d}z.
\end{aligned}
\end{equation}
The Cauchy--Schwarz inequality then yields
\begin{equation*}
\begin{aligned}
\alpha_\psi(z,r)^2   &= \left( \int_{\mathbb R} \rho_\psi(x)  b_r(x,z)^2 \mathrm{d}x \right)^2 \\
&\leq 2r  \int_{\mathbb R}\rho_\psi(x)^2 b_r(x,z) \mathrm{d} x.
\end{aligned}
\end{equation*}
This implies $\int_\mathbb{R} \alpha_\psi(z,r)^2 \mathrm{d}z \leq (2r)^2 \int_{\mathbb R} \rho_\psi(x)^2 \mathrm{d}x $.
Hence, for the first term on the right-hand side of Eq.~\eqref{eq:2terms} we find
\begin{equation} \label{eq:20m1}
\begin{aligned}
\int_{\mathbb R} \int_0^{\frac \gamma 2} v''(2r) \alpha_\psi(r,z)^2 \mathrm{d}r\mathrm{d}z
\leq \frac 1 2 \int_\R \rho_\psi^2 \mathrm{d}x \int_0^{\gamma} v''(r) r^2 \mathrm{d}r .
\end{aligned}
\end{equation}
For the second term, we obtain
\begin{equation} \label{eq:20m2}
\begin{aligned}
\int_{\mathbb R} \int_{\frac \gamma 2}^\infty  v''(2r) \int_{\mathbb R} b_r(x,z) \rho_\psi(x) \mathrm{d}x \mathrm{d}r\mathrm{d}z\\ 
 \leq \frac 1 2  \int_\R \rho_\psi \mathrm d x \int_{\gamma}^\infty v''(r) r \mathrm{d}r .
\end{aligned}
\end{equation}
Inserting Eqs.~\eqref{eq:20m1} and~\eqref{eq:20m2} into Eq.~\eqref{eq:2terms} gives Eq.~\eqref{eq:lem2} and completes the proof.
\end{proof}

\begin{remark}
Using Lemma~\ref{th:2}, we recover the estimate in Eq.~\eqref{eq:direct} under the same assumptions on $v$, viz., 
Eq.~\eqref{eq:lem2} with $\gamma = +\infty$ reduces to 
\begin{equation*}
I_\mathrm{xc}(\psi) 
\geq -\int_\mathbb{R} v(r)\mathrm{d}r \int_{\mathbb R}\rho_\psi(x)^2 \mathrm{d}x.
\end{equation*}
\end{remark}

Under more restrictions on the potentials, the general bounds above can be further specified. 
To that end, we introduce the following criteria for the first and second moments of $v''$.

\begin{assumption} 
\label{ass:1}
The potential $v: \R_+ \to \R $ satisfies
\begin{itemize}
    \item[(i)] $v$ is convex,
    \item[(ii)] $\lim_{r\to+\infty}v(r) = \lim_{r\to+\infty}v'(r)r= 0$, and 
    \item[(iii)] for $\gamma= \gamma(x) \geq 0$ 
    \begin{equation*}
    \begin{aligned}
    &\int_0^\gamma v''(r) r^2 \mathrm{d}r \leq c_1 \ln(1 + c_2\gamma), \quad c_1,c_2> 0, \\
    &\int_\gamma^\infty v''(r) r \mathrm{d}r \leq c_3 \gamma^{-1}, \quad c_3>0.
    \end{aligned}
    \end{equation*}
\end{itemize}
\end{assumption} 
\begin{remark}\label{rmk:Csing}
    For the Coulomb potential $v(r) = r^{-1}$ we have $\int_0^\gamma v''(r) r^2 \mathrm{d}r=+\infty$.
\end{remark}

\begin{theorem}
\label{lem:Ulemma}
Suppose Assumption~\ref{ass:1}. 
Then 
\begin{align}\label{eq:Ulemma1}
     I_\mathrm{xc}(\psi) &\geq  -8 \int_{\R} \rho_\psi^2  \Bigg[ A_1   + c_1   \ln\left( 1 + \frac{c_2 e^{-3}}{\rho_\psi}  \right) \Bigg]\mathrm{d}x,
\end{align}
where $A_1 =c_1 (\ln(2)+3)   + c_3$.
Moreover, for any $\alpha>0$
\begin{equation}
\label{eq:Ulemma2}
\begin{aligned}
    I_\mathrm{xc}(\psi) &\geq -\frac 1 2 \int_\R\rho_\psi^2 \mathrm{d}x \left[  \frac{Nc_3}\alpha  + c_1   \ln\left(1 + \frac{\alpha c_2}{\int_\R\rho_\psi^2\mathrm{d}x}\right)   \right].
\end{aligned}
\end{equation}

\end{theorem}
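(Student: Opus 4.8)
The plan is to treat the two inequalities separately: I would derive \eqref{eq:Ulemma2} from the maximal-function--free Lemma~\ref{th:2} (so with a \emph{constant} $\gamma$), and \eqref{eq:Ulemma1} from the Hainzl--Seiringer bound of Lemma~\ref{thm:HSlem2} (so with a position-dependent $\gamma(x)$). For \eqref{eq:Ulemma2} I start from Lemma~\ref{th:2} and insert the two moment estimates of Assumption~\ref{ass:1}(iii), which gives
\begin{equation*}
-I_\mathrm{xc}(\psi)\le \frac{c_1}{2}\ln(1+c_2\gamma)\int_\R\rho_\psi^2\,\mathrm dx+\frac{c_3}{2\gamma}\int_\R\rho_\psi\,\mathrm dx.
\end{equation*}
Since $\int_\R\rho_\psi\,\mathrm dx=N$, the second term equals $c_3N/(2\gamma)$. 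The only real choice is the constant $\gamma=\alpha/\!\int_\R\rho_\psi^2\,\mathrm dx$, which turns $c_3N/(2\gamma)$ into $\tfrac12(Nc_3/\alpha)\int_\R\rho_\psi^2\,\mathrm dx$ and $c_2\gamma$ into $\alpha c_2/\!\int_\R\rho_\psi^2\,\mathrm dx$ inside the logarithm; collecting the two contributions reproduces \eqref{eq:Ulemma2}. This half is routine once the scaling of $\gamma$ is spotted.

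For \eqref{eq:Ulemma1} the prefactor $8=\tfrac12 M_2^2$ already signals that Lemma~\ref{thm:HSlem2} must be combined with the $L^2$ bound $\Vert\mathcal M\rho_\psi\Vert_2^2\le M_2^2\Vert\rho_\psi\Vert_2^2=16\Vert\rho_\psi\Vert_2^2$ (Eq.~\eqref{eq:int-pol} with $p=2$). I would insert $\gamma(x)=1/(\mathcal M\rho_\psi)(x)$ into \eqref{eq:Hainzl-lem}; Assumption~\ref{ass:1}(iii) then produces
\begin{equation*}
-I_\mathrm{xc}(\psi)\le\frac12\int_\R(\mathcal M\rho_\psi)^2\Big[c_1\ln\big(1+c_2/(\mathcal M\rho_\psi)\big)+c_3\Big]\,\mathrm dx.
\end{equation*}
This choice of $\gamma$ is essentially forced: a constant $\gamma$ would leave the non-integrable $\int_\R\mathcal M\rho_\psi\,\mathrm dx=+\infty$ in the second term, exactly the failure noted in the Remark after Lemma~\ref{thm:HSlem2}. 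The $c_3$-contribution is then immediate, $\tfrac{c_3}2\cdot16\int_\R\rho_\psi^2\,\mathrm dx=8c_3\int_\R\rho_\psi^2\,\mathrm dx$, which is the $c_3$ summand of $A_1$.

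To extract the logarithmic term I would use the elementary pointwise chain $\ln(1+x)\le\ln2+(\ln x)^+$ and $(\ln(c_2/s))^+\le 3+\ln(1+c_2e^{-3}/s)$, which combine into
\begin{equation*}
\ln\big(1+c_2/s\big)\le(\ln2+3)+\ln\big(1+c_2e^{-3}/s\big),\qquad s>0.
\end{equation*}
The constant part contributes $\tfrac{c_1}2(\ln2+3)\cdot16\int_\R\rho_\psi^2\,\mathrm dx=8c_1(\ln2+3)\int_\R\rho_\psi^2\,\mathrm dx$, i.e.\ the remaining piece of $A_1$, and at the same time this step is what forces the shift $e^{-3}$ to appear inside the logarithm of \eqref{eq:Ulemma1}.

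The main obstacle is the surviving term $\tfrac{c_1}2\int_\R(\mathcal M\rho_\psi)^2\ln\big(1+c_2e^{-3}/(\mathcal M\rho_\psi)\big)\,\mathrm dx$, which must be bounded by $8c_1\int_\R\rho_\psi^2\ln(1+c_2e^{-3}/\rho_\psi)\,\mathrm dx$; equivalently one needs the maximal inequality $\int_\R g(\mathcal M\rho_\psi)\,\mathrm dx\le16\int_\R g(\rho_\psi)\,\mathrm dx$ for the profile $g(s)=s^2\ln(1+as)^{-1}$ written properly as $g(s)=s^2\ln(1+a/s)$ with $a=c_2e^{-3}$. The bare $L^2$ bound does not apply because $g$ is not homogeneous. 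My plan is to exploit that $g$ is convex with $g(0)=0$ — verified from $g''(s)=2\ln(1+a/s)-2a/(s+a)-a^2/(s+a)^2\ge0$ (nonnegative, as one sees by expanding in $w=a/(s+a)\in(0,1)$) — together with the pointwise domination $\mathcal M\rho_\psi\ge\rho_\psi$, which permits replacing $\mathcal M\rho_\psi$ by $\rho_\psi$ inside the \emph{decreasing} logarithm before the maximal bound is applied. Securing the sharp constant $16$ here, rather than a larger Orlicz-type constant, is the part I expect to require the most care, and is the step where the quadratic leading behaviour of $g$ at small $s$ must be used rather than its linear growth at large $s$.
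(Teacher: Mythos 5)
Your derivation of Eq.~\eqref{eq:Ulemma2} coincides with the paper's: Lemma~\ref{th:2} with the constant $\gamma=\alpha/\int_\R\rho_\psi^2\,\mathrm dx$ and $\int_\R\rho_\psi\,\mathrm dx=N$. For Eq.~\eqref{eq:Ulemma1} you also start exactly as the paper does (Lemma~\ref{thm:HSlem2} with $\gamma=(\mathcal M\rho_\psi)^{-1}$, then peeling off the constants $c_3$ and $c_1(\ln 2+3)$, each absorbed via $M_2^2=16$). The divergence --- and the gap --- lies in the step you yourself flag as the main obstacle.

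The inequality you need there, $\int_\R g(\mathcal M\rho_\psi)\,\mathrm dx\le 16\int_\R g(\rho_\psi)\,\mathrm dx$ with $g(s)=s^2\ln(1+a/s)$, is false, and convexity of $g$ with $g(0)=0$ cannot rescue it. The problem is the behaviour of $g$ at \emph{large} $s$, not small $s$: $g(s)\sim as$ as $s\to\infty$, so $g$ is $L^1$-like precisely in the regime where the maximal operator is unbounded. Concretely, take $\rho$ (suitably smoothed) of height $1/\eps$ on an interval of length $\eps$: then $\int_\R g(\rho)\approx a$, while $\mathcal M\rho(x)\gtrsim 1/(2|x|)$ gives $\int_\R g(\mathcal M\rho)\,\mathrm dx\gtrsim\int_\eps^{1/(2a)}\tfrac{a}{2x}\,\mathrm dx\sim\tfrac a2\ln(1/\eps)\to\infty$, so no universal constant exists. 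The fallback of first replacing $\mathcal M\rho_\psi$ by $\rho_\psi$ inside the decreasing logarithm fails even more directly: it produces the weighted integral $\int_\R(\mathcal M\rho_\psi)^2\ln(1+a/\rho_\psi)\,\mathrm dx$, which is $+\infty$ wherever $\rho_\psi=0$ but $\mathcal M\rho_\psi>0$, and in any case the unweighted $M_2$ bound says nothing about weighted $L^2$ norms. Your preliminary rewriting $\ln(1+c_2/s)\le(\ln 2+3)+\ln(1+c_2e^{-3}/s)$ is exactly what reintroduces the harmful linear tail. The paper instead keeps the truncated quantity $[\ln(c_2/\mathcal M\rho_\psi)]_+$, which vanishes where $\mathcal M\rho_\psi\ge c_2$, bounds it pointwise by $\inf_{0<s<1/3}(c_2/\mathcal M\rho_\psi)^s/(es)$, and applies the maximal theorem only to the pure powers $(\mathcal M\rho_\psi)^{2-s}$ with $2-s\in(5/3,2)$, for which $M_{2-s}^{2-s}\le 16$ uniformly; the constants $e^{-3}$ and the $+3$ in $A_1$ then emerge at the very end from $\inf_{0<s<1/3}f^s/(es)\le\ln(e^3+f)=3+\ln(1+e^{-3}f)$. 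You would need to adopt this (or an equivalent small-power) device to close the argument; as proposed, the final step does not go through.
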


\begin{proof}
To prove Eq.~\eqref{eq:Ulemma1}, we use Lemma~\ref{thm:HSlem2} with the choice $\gamma(x) = (\mathcal M\rho_\psi)(x)^{-1}$. 
This yields 
\begin{align*}
    I_\mathrm{xc}(\psi) & \geq 
-\frac 1 2  \int_{\R}  (\mathcal M\rho_\psi)^2 \Bigg[ c_3 +  c_1 \ln \left(1 + \frac{c_2}{\mathcal M\rho_\psi} \right) \Bigg] \mathrm d x \\
&= -\frac 1 2 \int_{\R}\Bigg( (c_1 \ln(2)  + c_3) (\mathcal M\rho_\psi)^2  \\
&\quad + c_1  (\mathcal M\rho_\psi)^2 \left[\ln \frac{c_2}{\mathcal M\rho_\psi}\right]_+ \Bigg) \mathrm{d}x.
\end{align*}
We now proceed similar to Ref.~\onlinecite{Hainzl} and use that  
\begin{equation*}
[\ln f(x)]_+ \leq \inf_{0<s<1/3} \frac{f(x)^s}{e s},
\end{equation*}
where $[\;g(x)\;]_+= \max(g(x),0)$ denotes the positive part of $g(x)$. 
This implies
\begin{equation*}
\begin{aligned}
    -\frac 1 2 & c_1 \int_{\R}(\mathcal M\rho_\psi)^2  \left[\ln \frac{c_2}{\mathcal M\rho_\psi}\right]_+ \mathrm{d}x \\
    & \geq  -\frac 1 2 c_1 \int_{\R}(\mathcal M\rho_\psi)^2  
    \inf_{0<s<1/3} \frac{1}{se}\left( \frac{c_2}{\mathcal M\rho_\psi}\right)^s \mathrm{d}x \\
    &\geq - 8 c_1 \int_{\R} \rho_\psi^2  
    \inf_{0<s<1/3} \frac{1}{se}\left( \frac{c_2}{\rho_\psi}\right)^s \mathrm{d}x \\
    &\geq - 8 c_1 \int_{\R} \rho_\psi^2  \ln\left( e^3 + \frac{c_2}{\rho_\psi}  \right) \mathrm{d}x .
\end{aligned}
\end{equation*}
Hence,
\begin{align*}
    I_\mathrm{xc}(\psi) &\geq-8 \int_{\R}\rho_\psi^2 \Bigg[ (c_1 \ln(2)  + c_3)   
     + c_1 \ln\left( e^3 + \frac{c_2}{\rho_\psi}  \right) \Bigg]\mathrm{d}x .
\end{align*}

To prove Eq.~\eqref{eq:Ulemma2}, we use Lemma~\ref{th:2}. 
Choose $\gamma= \alpha /\int_\R \rho^{2} \mathrm{d}x$. 
Since $v$ fulfills Assumption~\ref{ass:1}, we obtain
\begin{equation*}
\begin{aligned}
I_\mathrm{xc}(\psi) & \geq - \frac 1 2 \int_\R\rho_\psi^2\mathrm{d}x \left[ \frac{N c_3}\alpha  +   c_1 \ln \left(1 + 
\frac{\alpha c_2}{\int_\R\rho_\psi^2 \mathrm{d}x} \right) \right].
\end{aligned}
\end{equation*}
\end{proof}

\begin{remark}
We highlight that one of the terms in the bound in Eq.~\eqref{eq:Ulemma2} depends on the particle number $N$ and furthermore that $\alpha=\alpha(N)$ is a viable choice, see Sec.~\ref{sec:con}, in particular Eq.~\eqref{eq:Ndep1} and \eqref{eq:Ndep2}. 
(Note that an $\alpha>0$ could also be introduced for Eq.~\eqref{eq:Ulemma1} by instead choosing 
$\gamma(x) = \alpha/ \mathcal M \rho_\psi$ in the proof.)
Although an $N$-dependent bound in the three-dimensional case has been postulated~\cite{LiebOxford} and also  approximated~\cite{odashima2009tightened}, the bound presented here in Eq.~\eqref{eq:Ulemma2} differs significantly.
Note that the right hand side in Eq.~\eqref{eq:Ulemma2} can not be bounded for all $N$ and diverges as $N\to \infty$.
This violates property (iv) in Ref.~\citenum{odashima2009tightened} proposed for a valid particle-dependent bound in three dimensions. 
However, we want to emphasize that a closed analytic expression $C_3(N)$ for all $N$ is, to the best of our knowledge, not known at the moment in the three-dimensional case and similar 
investigations for $d=1$ exceed the scope of this manuscript and are left for future work. 
\end{remark}

%
%
%
%

\section{Lieb--Oxford bounds in one dimension for Coulomb-type potentials}
\label{sec:III}

Based on a universal scaling argument, Ref.~\citenum{Rasanen} conjectured that  
the indirect interaction energy in one dimension satisfies
\begin{equation}
\label{eq:gen-bound}
I_\mathrm{xc}(\psi) \geq -C_1 \int_{\mathbb R} \rho_\psi^{2} \mathrm{d}x,
\quad C_1>0. 
\end{equation}
The analysis in Ref.~\citenum{Rasanen} is based on the observation that
\begin{align*}
I_\mathrm{xc}(\psi) \geq -\frac{C_1}{A_1}   E_\mathrm{x}^\mathrm{LDA}(\rho_\psi),
\end{align*}
where $E_\mathrm{x}^\mathrm{LDA}(\rho)= A_1 \int_{\mathbb R} \rho(x)^{2 }\mathrm{d}x$ is the exchange energy for a homogeneous gas in one dimension. 
(Note that $E_\mathrm{xc}(\psi)$ in the notation of Ref.~\citenum{Rasanen} is the exchange-correlation energy that equals $I_\mathrm{xc}(\psi)$ plus the non-negative contribution from correlated kinetic energy, and our $I_\mathrm{xc}(\psi)$ is denoted $W_\mathrm{xc}(\psi)$ in Ref.~\citenum{Rasanen}.)
In the limit $r_s= 1/(2\rho)\to\infty$, the quotient  $I_\mathrm{xc}(\psi)/E_\mathrm{x}^\mathrm{LDA}(\rho_\psi)$ was studied giving an estimate 
for $\lambda_1:= C_1/A_1 $. Knowing $A_1$ would then give an estimate for $C_1$.
This was also confirmed for a finite homogeneous and strictly correlated electron gas in the limit $N\to\infty$ by R\"as\"anen, Seidl and Gori--Giori~\cite{Rasanen2011} (in Ref.~\citenum{Rasanen2011} the notation $\bar\lambda_1$ was used).  
In Refs.~\citenum{Rasanen} and \citenum{Rasanen2011}, bounds for contact, soft Coulomb, and regularized Coulomb potential were studied. 
The soft Coulomb potential, with a softening parameter $\eps>0$, and the regularized Coulomb potential, with parameter $\beta>0$, are given by
\begin{align} 
\label{eq:SCdef}
v_\eps(r) &= \frac{1}{\sqrt{r^2 + \eps^2}}, \\
v_\beta(r) &=\frac{\sqrt{\pi}}{2\beta} e^{r^2/(4\beta^2)}\,\mathrm{erfc}\left(\frac{r}{2\beta}\right), \label{eq:REGdef}
\end{align}
respectively.
Here, ${\rm erfc}(r) = 1- {\rm erf}(r)$ is the {\it complementary error function}. 

Using that $A_1= 1/2$ for the soft Coulomb potential, R\"as\"anen et al.~obtained 
a modified one-dimensional Lieb--Oxford bound~\cite{Rasanen} 
\begin{align}
\label{eq:gen-bound2}
I_\mathrm{xc}(\psi) \geq - C_1 \int_{\R}  \rho_\psi(x)^2 \left(  K_1 + \ln \left(\frac{K_2}{ \eps\rho_\psi(x)}\right) \right) \mathrm{d}x,
\end{align}
with $C_1 =1 $, $K_1= 3 / 2 -\mu$, and $K_2=2/\pi$ where $\mu= 0.577$ is Euler's constant (see also Ref.~\citenum{Rasanen2011} and Eq.~(2) in Ref.~\citenum{fogler2005}).
Based on the physical arguments in both Refs.~\citenum{Rasanen} and~\citenum{Rasanen2011}, the constant $\overline\lambda_1=C_1/A_1=2$ was obtained for contact and soft Coulomb potential.
However, the potentials have different values of the exchange constant $A_1$.

Moreover, Ref.~\citenum{Rasanen2011}
considered the representation of the {\it Yukawa interaction} in an infinite cylindrical wire of radius $\beta$ (see also Ref.~\citenum{giuliani2005quantum}).
This results in the 
regularized Coulomb potential with cutoff parameter $\beta>0$ in Eq.~\eqref{eq:REGdef}. 
Although, the value $\lambda_1=2$ was numerically supported~\cite{Rasanen2011}, no explicit Lieb--Oxford bound was derived for 
this potential.

In this section we  discuss bounds of the conjectured form in Eq.~\eqref{eq:gen-bound}.
Furthermore, for a convex version of the soft Coulomb potential and for the regularized Coulomb potential we prove  
Lieb--Oxford bounds similar to Eq.~\eqref{eq:gen-bound2} and thereby 
complementing the numerical analysis in Refs. \citenum{Rasanen} and \citenum{Rasanen2011}.

\subsection{Convex soft Coulomb and regularized Coulomb potentials}
\label{sec:ConvexAndSoftCoulomb}

We will here make use of Theorem~\ref{lem:Ulemma} that uses Assumption~1.
 To make the soft Coulomb potential $v_\eps$ (with softening parameter $\eps$) convex, we set
$\tilde v_\eps(r) =  v_\eps(r + r_\eps)$, for $\eps>0$, $r\geq 0$, 
where $r_\eps=\eps/\sqrt{2}$ and $v_\eps$ given by Eq.~\eqref{eq:SCdef}.
\begin{figure}
    \centering
    \includegraphics[width = 0.48\textwidth]{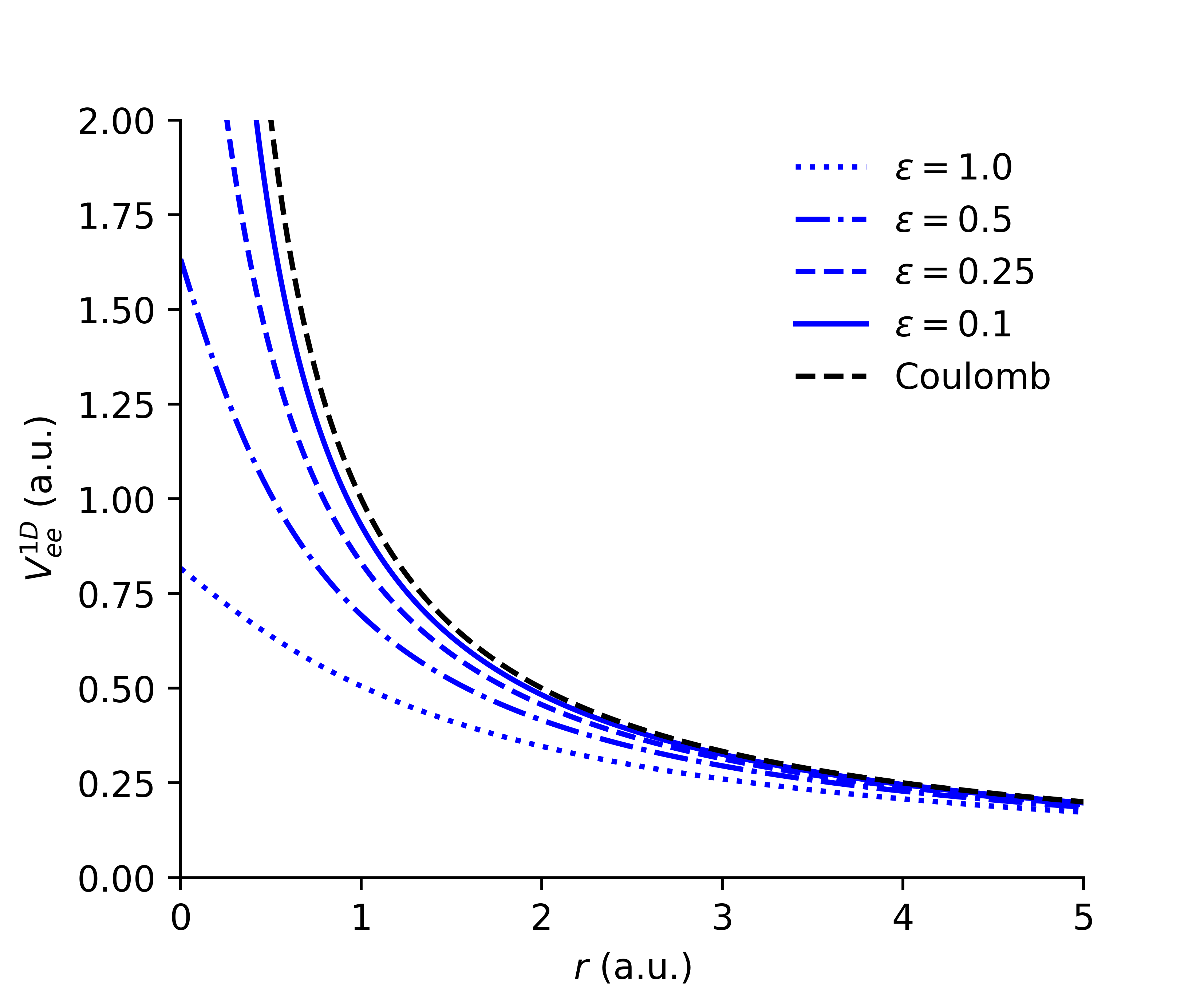}
    \caption{Graphical representation of convex soft Coulomb potential $\tilde v_\eps$ for different softening parameters $\epsilon$ compared to the Coulomb potential $v = 1/r$ itself.}
    \label{fig:CSCPotentials}
\end{figure}
Direct calculation shows that the derivatives satisfy
\begin{equation*}
\begin{aligned}
\tilde v_\eps(r)' &=-\frac{r + r_\eps}{((r+ r_\eps)^2 +\eps^2)^{3/2}}, \\\tilde v_\eps(r)'' &=    \frac{2(r+r_\eps)^2 -\eps^2}{((r+r_\eps)^2 +\eps^2)^{5/2}}.
\end{aligned}
\end{equation*}
Thus, $\tilde v_\eps(r)$ is convex (see Fig.~\ref{fig:CSCPotentials}).

The regularized Coulomb potential is displayed for different values of the cutoff parameter $\beta$ in Fig.~\ref{fig:RegularizedPotentials}. It is straightforward to verify that it is convex for all $\beta>0$. 
Essential for our argument is the following elementary fact about the complementary error function
\begin{equation} \label{eq:erfc-help}
\frac{2}{\sqrt{\pi}} \frac{e^{-r^2}}{r + \sqrt{r^2 + 2}}
\leq 
\mathrm{erfc}(r) 
\leq 
\frac{2}{\sqrt{\pi}} \frac{e^{-r^2}}{r + \sqrt{r^2 + \frac 4 \pi}}.
\end{equation}

\begin{theorem} \label{thm:csc}
    Let $\alpha >0$. The Lieb--Oxford bounds 
\begin{align}\label{eq:IxcA}
          I_\mathrm{xc}(\psi) &\geq  -8 \int_{\R} \rho_\psi^2  \Bigg[ A_1   + c_1   \ln\left( 1 + \frac{c_2 e^{-3}}{\rho_\psi}  \right) \Bigg]\mathrm{d}x, \\
I_\mathrm{xc}(\psi) & \geq - \frac 1 2 \int_\R\rho_\psi^2\mathrm{d}x \left[ \frac{N c_3}\alpha  +   c_1 \ln \left(1 + 
\frac{\alpha c_2}{\int_\R\rho_\psi^2 \mathrm{d}x} \right) \right], \label{eq:IxcB}
\end{align}
hold for the potentials:
    \begin{itemize}
        \item[(i)] Convex soft Coulomb potential $\tilde v_\eps$ with $c_1=c_3=2$, $c_2=c_2(\eps) = \sqrt{2}/\eps$, and $A_1= 2(\ln(2) + 4)$.
        \item[(ii)] Regularized Coulomb potential $v_\beta$ with $c_1=c_3=4$, $c_2=c_2(\beta) = \sqrt{\pi}/(4\beta)$, and $A_1= 4(\ln(2) + 4)$.
    \end{itemize}    
\end{theorem}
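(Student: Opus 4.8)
The plan is to observe that the two displayed inequalities \eqref{eq:IxcA} and \eqref{eq:IxcB} are \emph{verbatim} the conclusions \eqref{eq:Ulemma1} and \eqref{eq:Ulemma2} of Theorem~\ref{lem:Ulemma}. Hence nothing new about the indirect energy itself has to be established: the entire task reduces to checking, for each of the two potentials, that Assumption~\ref{ass:1} holds with the quoted constants $c_1,c_2,c_3$, and that the quoted $A_1$ is indeed $c_1(\ln 2+3)+c_3$. The latter is an immediate arithmetic check, since $2(\ln 2+3)+2=2(\ln 2+4)$ for case (i) and $4(\ln 2+3)+4=4(\ln 2+4)$ for case (ii). All the real work therefore lies in verifying conditions (i)--(iii) of Assumption~\ref{ass:1}.

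First I would dispatch the two qualitative hypotheses. Convexity~(i) of $\tilde v_\eps$ is already recorded before the theorem through the sign of $\tilde v_\eps''$, which is nonnegative precisely on $r\geq 0$ because $r_\eps=\eps/\sqrt{2}$; for $v_\beta$ I would prove $v_\beta''\geq 0$ from the first-order relation $2\beta^2 v_\beta'(r)=r v_\beta(r)-1$ (read off from \eqref{eq:REGdef}), which upon one more differentiation gives $4\beta^4 v_\beta''(r)=(2\beta^2+r^2)v_\beta(r)-r$, combined with the lower estimate for $\mathrm{erfc}$ in \eqref{eq:erfc-help}, yielding $v_\beta(r)\geq(\sqrt{r^2+8\beta^2}-r)/(4\beta^2)\geq r/(2\beta^2+r^2)$. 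The decay hypothesis~(ii) follows for $\tilde v_\eps$ directly from \eqref{eq:SCdef}, and for $v_\beta$ from the asymptotics $v_\beta(r)\sim 1/r$ and $v_\beta'(r)r\to 0$ encoded in the same first-order relation.

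The core of the argument is condition~(iii), and here I would treat both potentials by the same device, integration by parts. For the second moment,
\begin{equation*}
\int_0^\gamma v''(r)r^2\,\mathrm dr = v'(\gamma)\gamma^2-2v(\gamma)\gamma+2\int_0^\gamma v(r)\,\mathrm dr \leq 2\int_0^\gamma v(r)\,\mathrm dr,
\end{equation*}
where the two boundary terms are dropped because $v>0$ and $v'<0$; for the first moment,
\begin{equation*}
\int_\gamma^\infty v''(r)r\,\mathrm dr = v(\gamma)-\gamma v'(\gamma),
\end{equation*}
using $\lim_{r\to\infty}v'(r)r=0$ from~(ii). It then remains to bound $\int_0^\gamma v\,\mathrm dr$ above by $\tfrac{c_1}{2}\ln(1+c_2\gamma)$ and $v(\gamma)-\gamma v'(\gamma)$ above by $c_3/\gamma$. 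For $\tilde v_\eps$ both are explicit: $\int_0^\gamma \tilde v_\eps$ is the logarithm $\ln[(\gamma+r_\eps+\sqrt{(\gamma+r_\eps)^2+\eps^2})/(r_\eps+\sqrt{r_\eps^2+\eps^2})]$, and $v(\gamma)-\gamma v'(\gamma)$ is a rational expression in $\gamma$ that is bounded termwise by $1/\gamma+1/\gamma=2/\gamma$, giving $c_1=c_3=2$ and $c_2=\sqrt{2}/\eps$. For $v_\beta$ I would insert the upper estimate $v_\beta(r)\leq 2/(r+\sqrt{r^2+16\beta^2/\pi})$ from \eqref{eq:erfc-help}, whose primitive is elementary, and rewrite $v_\beta(\gamma)-\gamma v_\beta'(\gamma)$ via the first-order relation before estimating, producing $c_1=c_3=4$ and $c_2=\sqrt{\pi}/(4\beta)$.

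The main obstacle will not be the integration-by-parts reduction but the final scalar inequalities, which must hold for \emph{all} $\gamma\geq 0$ and are essentially sharp at $\gamma=0$. Concretely, after the substitution $x=\gamma\,c_2$ the second-moment bound for $v_\beta$ becomes $x(\sqrt{x^2+1}-x)+\ln(x+\sqrt{x^2+1})\leq 2\ln(1+x)$, which is tight to leading and next order as $x\to 0$; the soft-Coulomb analogue is of the same flavour. Proving these cleanly, together with tracking in the first-moment estimate for $v_\beta$ which side of the two-sided bound \eqref{eq:erfc-help} to use depending on the sign of the factor $1-\gamma^2/(2\beta^2)$, is where the care is needed. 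Once these elementary inequalities are in hand, Assumption~\ref{ass:1} holds with the stated constants and Theorem~\ref{lem:Ulemma} delivers \eqref{eq:IxcA}--\eqref{eq:IxcB} at once.
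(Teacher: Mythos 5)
Your proposal is correct and follows essentially the same route as the paper: both reduce the theorem to verifying Assumption~\ref{ass:1} for each potential (integration by parts for the moments of $v''$, the explicit logarithmic antiderivative for $\tilde v_\eps$, and the two-sided $\mathrm{erfc}$ estimates together with the relation $2\beta^2 v_\beta' = r v_\beta - 1$ for $v_\beta$), and then invoke Theorem~\ref{lem:Ulemma}. The only real divergence is cosmetic: for the second moment of $v_\beta$ the paper implicitly avoids your ``tight'' scalar inequality by using the cruder pointwise bound $\sqrt{r^2+a^2}\geq a$, which gives $2\int_0^\gamma v_\beta \leq 4\int_0^\gamma (r+a)^{-1}\mathrm{d}r = 4\ln(1+\gamma/a)$ directly, whereas your exact-primitive route requires (but does admit, e.g.\ by checking that the difference of the two sides has nonnegative derivative) the inequality $x(\sqrt{x^2+1}-x)+\ln(x+\sqrt{x^2+1})\leq 2\ln(1+x)$.
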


\begin{remark}
	Equation \eqref{eq:IxcA} in Theorem~\ref{thm:csc} establishes a bound similar to  Eq.~\eqref{eq:gen-bound2}, with $\ln(\cdot)$ replaced by $\ln(1 + \cdot)$. Note that the inequality $\ln(1 + f) \leq \ln(2) + [\ln(f)]_+$ can be used such that Eq.~\eqref{eq:IxcA} implies
\begin{equation*}
\begin{aligned} 
I_\mathrm{xc}(\psi) 
\geq 
-8 \int_{\R}\Bigg( A_2 \rho_\psi^2 
+ c_1 \rho_\psi^2  \left[\ln\left( \frac{c_2 e^{-3}}{\rho_\psi}  \right)\right]_+ \Bigg)\mathrm{d}x,
\end{aligned} 
\end{equation*}
where $A_2 = A_1 +c_1\ln(2)$.
\end{remark}

\begin{figure}
    \centering
    \includegraphics[width = 0.48\textwidth]{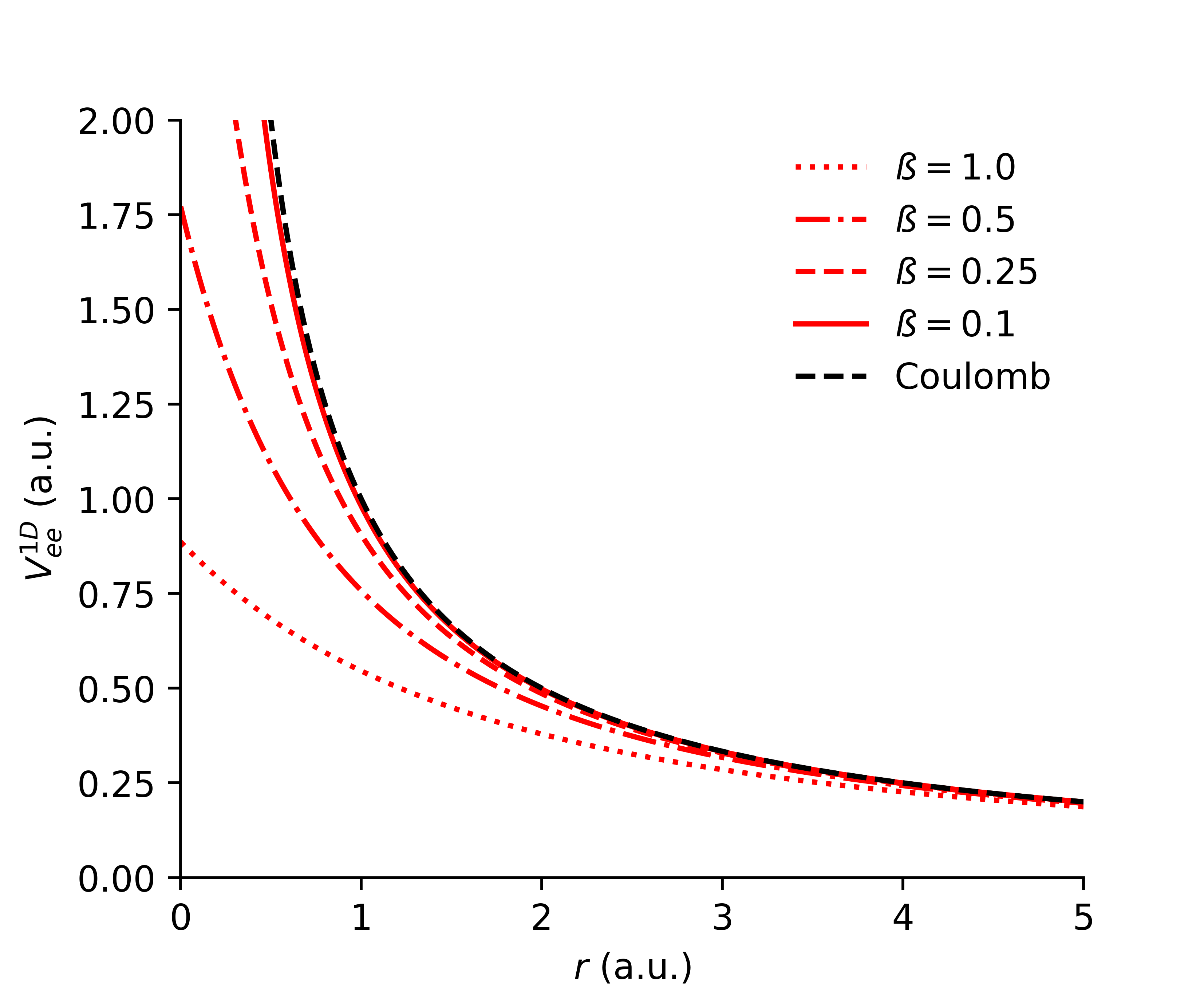}
    \caption{Graphical representation of the regularized Coulomb potential $v_\beta$ for different cutoff parameters $\beta$ compared to the Coulomb potential $v=1/r$ itself.}
    \label{fig:RegularizedPotentials}
\end{figure}

\begin{proof} For both (i) and (ii) we establish the conditions on 
the first and second moment of $v''$ as given in Assumption~\ref{ass:1}.

(i) Let $v$ be the convex soft Coulomb potential $\tilde v_\eps$. 
For the first moment we find
\begin{equation*}
\begin{aligned}
\int_{\gamma}^\infty v'' r \mathrm d r \leq 
\frac{2}{((\gamma +r_\eps)^{2} + \eps^2)^{1/2}}\leq \frac 2 {\gamma}. \label{eq:cscI2}
\end{aligned}
\end{equation*}
Furthermore note that $\int_{0}^{\gamma} v'' r^2 \mathrm d r \leq 2\int_{0}^{\gamma} v \mathrm d r$ such that 
\begin{equation*}
\begin{aligned}
\int_{0}^{\gamma} v'' r^2 \mathrm d r &\leq      2 \left[\ln(\sqrt{r^2 +\eps^2} + r) \right]_{r_\eps}^{\gamma + r_\eps}     \\
& =2 \ln \left( \frac {\gamma+r_\eps}{r_\eps} \frac{ \sqrt{1 +\eps^2/(\gamma+r_\eps)^2} + 1}{\sqrt{3}+1 } \right)   \\
&  \leq 2 \ln \left(  \frac {\gamma+r_\eps}{r_\eps} \right)  
= 2 \ln \left( 1 + \frac {\gamma}{r_\eps} \right) .
\end{aligned}
\end{equation*}
Thus Assumption~\ref{ass:1} holds with the constants $c_1=c_3 = 2$ and $c_2=c_2(\eps) = 2/\eps$. 

(ii) We now consider the case when $v$ is the regularized Coulomb potential $v_\beta$. 
Integrating the second moment of $v''$ by parts, and using that $v'(\gamma) \gamma^2 - 2\gamma v(\gamma)\leq 0$ yields
\begin{equation*}
\int_0^\gamma v''(r)r^2 \mathrm{d}r  \leq 2 \int_0^\gamma v(r) \mathrm{d}r.
\end{equation*}
Employing the upper bound in Eq.~\eqref{eq:erfc-help} we find
\begin{equation} \label{eq:erfc-helphelp}
\begin{aligned}
        v(r) \leq \frac{2}{r + \sqrt{r^2 + \frac{(4\beta^2)}{\pi}}  } 
\end{aligned}
\end{equation}
and therewith
\begin{equation*}
\begin{aligned}
\int_0^\gamma v''(r)r^2 \mathrm{d}r \leq 4 \ln\left(1 + \frac{\sqrt{\pi}\gamma}{4\beta}\right) .
\end{aligned}
\end{equation*}

For the other inequality, 
integrating the first moment of $v''$ by part yields
\begin{equation*}
\int_{\gamma}^\infty v''(r)r \mathrm d r =-\gamma v'(\gamma) + v(\gamma).
\end{equation*}
Using the lower bound in Eq.~\eqref{eq:erfc-help} we obtain
\begin{equation*}
\begin{aligned}
-\gamma v'(\gamma) 
&\leq
\frac{\gamma}{2\beta^2}\left(1-    \frac{2}{1 + \sqrt{1 + \frac{8\beta^2}{\gamma^2}}}\right).
\end{aligned}
\end{equation*}
By the fact that $\sqrt{1+r}\leq 1+r/2$ for $r\geq 0$, we find
\begin{equation*}
\begin{aligned} 
1-\frac{2}{1 + \sqrt{1 + \frac{8\beta^2}{\gamma^2}}} 
\leq 
\sqrt{1 + \frac{8\beta^2}{\gamma^2}}-1 
\leq   
\frac{4\beta^2}{\gamma^2}.
\end{aligned}
\end{equation*}
Hence, $\int_{\gamma}^\infty v''(r)r \mathrm{d}r \leq 4 /\gamma$ since $v(\gamma)\leq 2/\gamma$ 
by Eq.~\eqref{eq:erfc-helphelp}. 
To conclude, Assumption~\ref{ass:1} is fulfilled with $c_1=c_3=4$ and $c_2=c_2(\beta) = \sqrt{\pi}/(4\beta)$.
\end{proof}

\begin{remark}
    Note that $\mathrm{erfc}(r) \leq e^{-r^2} /(\sqrt{\pi} r)$. This yields a tighter constant, $c_3=3$, for the regularized Coulomb potential since it then follows that $v_\beta(\gamma) \leq 1/\gamma$. 
\end{remark}

%
%
%
%
%

\subsection{Conjectured bound in one dimension based on the scaling argument}
\label{sec:con}

The conjectured bound in Eq.~\eqref{eq:gen-bound} based on the universal scaling argument of Ref.~\citenum{Rasanen} can be readily obtained for the contact potential.
By simply inserting $v=\delta$ into the direct interaction energy $D(\rho_\psi,\rho_\psi)$, we obtain
\begin{equation}
\label{eq:ContactDirect}
I_\mathrm{xc}(\psi) \geq -D(\rho_\psi,\rho_\psi) = -\frac 1 2 \int_{\mathbb R} \rho_\psi^2 \mathrm{d}x .
\end{equation}
Note that Eq.~\eqref{eq:ContactDirect} was proposed and motivated but the proof not spelled out in Ref.~\citenum{Rasanen}.

Furthermore, we can apply Lemma~\ref{th:2} to (convex) {\it approximate contact potentials} (see Fig.~\ref{fig:ContactPotentials}). 
Let $\sigma>0$ and define
\begin{equation*}
v_\sigma(r) = 
\left\lbrace 
\begin{aligned}
&\frac{2}{\sigma}-\frac{2r}{\sigma^2}, &&0\leq r \leq \sigma, \\
&0, &&\sigma <r.
\end{aligned}
\right.
\end{equation*}
\begin{figure}
    \centering
    \includegraphics[width = 0.48\textwidth]{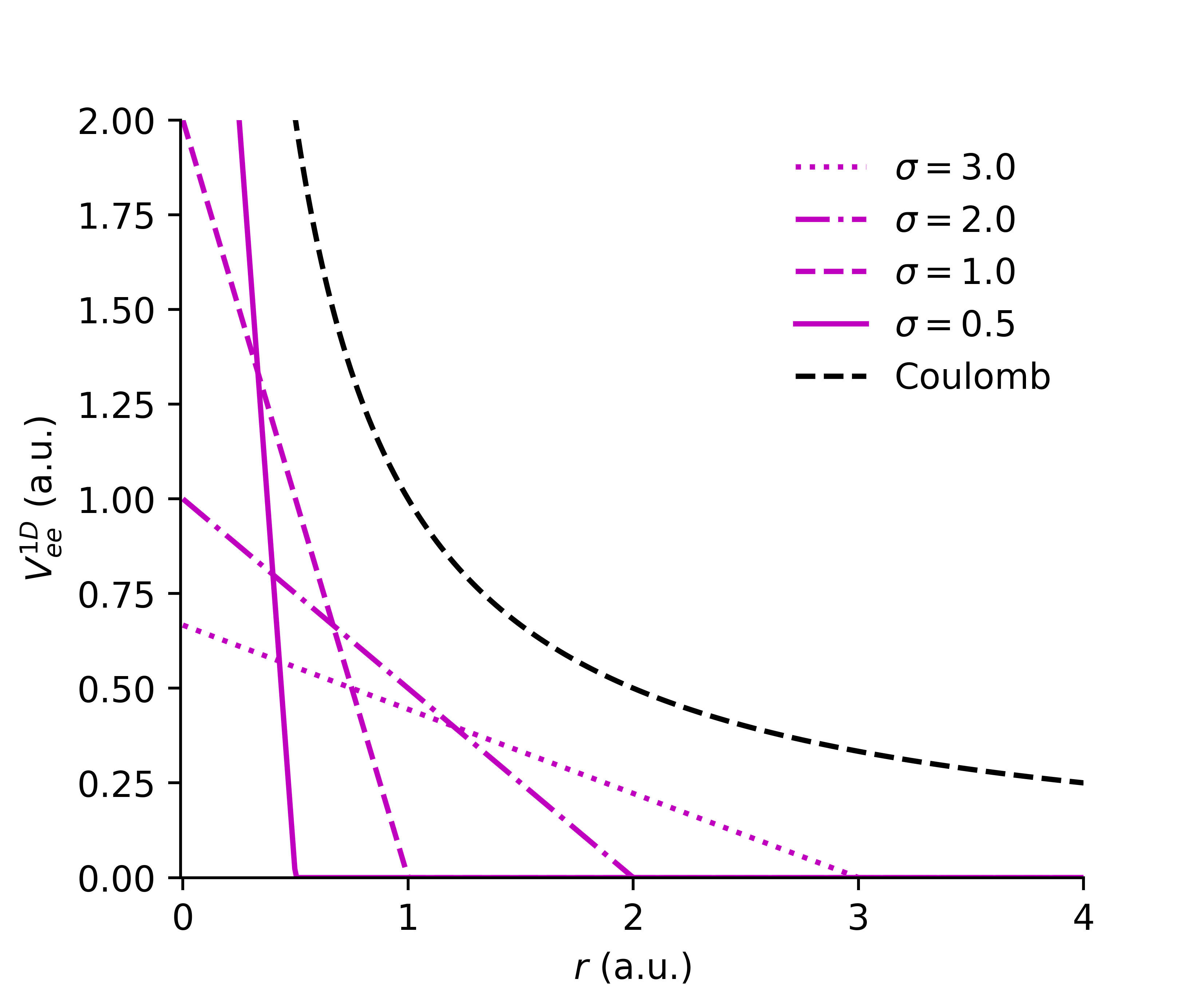}
    \caption{Graphical representation of the approximate contact potential for different approximation parameters $\sigma$ compared to the Coulomb potential itself.}
    \label{fig:ContactPotentials}
\end{figure}
Since $\int_{\mathbb{R}}v_\sigma(r){\rm d}r = 1$, Eq.~\eqref{eq:direct} with $\gamma>\sigma$ yields
\begin{equation*}
I_\mathrm{xc}(\psi)  \geq - \int_{\mathbb R} \rho_\psi^2 \mathrm{d}x .    
\end{equation*}
(Note that $\int_{\mathbb{R}}v_\sigma'' r^2{\rm d}r = 2$ can also be obtained using Eq.~\eqref{eq:lem2}. In that case one either integrates by parts or insert the second order distributional derivative $v_\sigma(r)'' = 2\delta(r-\sigma)/\sigma^2$.)
This bound holds for all $\sigma>0$. Note, however, that in the limit $\sigma\to 0+$, we obtain a nonoptimal constant (twice as large as in Eq.~\eqref{eq:ContactDirect}).

We next discuss the homogeneous potential $v(r) = r^{\eps-1}$. 
In this case Lemma~\ref{thm:HSlem2} with $\gamma=(\mathcal M\rho_\psi)^{-1}$
gives a Lieb--Oxford bound that is arbitrarily close to the conjectured form in Eq.~\eqref{eq:gen-bound}.
(Di Marino has proven similar result in the setting of strictly correlated electrons~\cite{Marino}.) 
This coincides with the general result of Lundholm et al., Lemma~16 in Ref.~\citenum{lundholm2016fractional} with $d=1$, 
viz.,
\begin{align}
I_\mathrm{xc}(\psi) \geq -  \frac{2^{2-\eps}(2-\eps)^2}{\eps(1-\eps)} \int_{\mathbb R} \rho_\psi^{2-\eps} \mathrm{d}x. \label{eq:Hlund}
\end{align}
As the conjectured form $\int_\R\rho_\psi^2\mathrm d x$ is approached, i.e., in the limit $\eps\to0_+$, we have $2^{2-\eps}(2-\eps)^2/(\eps(1-\eps)) \to +\infty$, and where 
the use of the Hardy--Littlewood maximal function introduces the extra factor $2^{3-\eps}(2-\eps)/(1-\eps)$. 
Alternatively, we here note that we can use Lemma~\ref{th:2} to obtain the conjectured form for $v=r^{\eps-1}$ and any $0<\eps<1$, although also with an unbounded constant as $\eps\to0+$.

\begin{theorem}
Let $v(r)=  r^{\eps-1}$ and $0<\eps<1$, then 
\begin{equation*} 
\begin{aligned} 
 I_\mathrm{xc}(\psi) 
&\geq -\left( \frac 1 \eps + \eps - 3\right) \int_{\mathbb R} \rho_\psi^2\mathrm{d}x 
- \left(1-\frac \eps 2 \right)\int_\R \rho_\psi \mathrm{d}x .
\end{aligned}
\end{equation*}
In particular, there exists a $C_\eps>0$ such that 
\begin{align*}
  I_\mathrm{xc}(\psi)  \geq -C_\eps \int_{\mathbb R} \rho_\psi^2\mathrm{d}x.
\end{align*}
\end{theorem}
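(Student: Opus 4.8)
The plan is to apply Lemma~\ref{th:2} directly to the homogeneous potential, since for $0<\eps<1$ the function $v(r)=r^{\eps-1}$ satisfies its hypotheses. Differentiating gives $v''(r)=(\eps-1)(\eps-2)r^{\eps-3}$, and both factors $(\eps-1)$ and $(\eps-2)$ are negative on $(0,1)$, so $v''>0$ and $v$ is convex; moreover $v(r)\to 0$ as $r\to\infty$ because $\eps-1<0$. Thus the two moments entering Eq.~\eqref{eq:lem2} are finite for every finite $\gamma>0$, and Lemma~\ref{th:2} is applicable.

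First I would compute the two moments explicitly, as they are elementary power-law integrals. One finds $\int_0^\gamma v''(r)r^2\,\mathrm d r=\tfrac{(\eps-1)(\eps-2)}{\eps}\gamma^{\eps}$, which is finite precisely because $\eps>0$ makes $r^{\eps-1}$ integrable at the origin, and $\int_\gamma^\infty v''(r)r\,\mathrm d r=(2-\eps)\gamma^{\eps-1}$, which converges because $\eps<1$ forces $r^{\eps-2}$ to be integrable at infinity (equivalently, one integrates by parts and uses $\lim_{r\to\infty}v'(r)r=0$, which is exactly the decay hypothesis of the lemma). Inserting these two closed-form expressions into Eq.~\eqref{eq:lem2} and taking the constant $\gamma=1$ then yields the stated two-term estimate, the coefficient of $\int_{\mathbb R}\rho_\psi^2\,\mathrm d x$ coming from $\tfrac12\cdot\tfrac{(\eps-1)(\eps-2)}{\eps}$ and the coefficient of $\int_{\mathbb R}\rho_\psi\,\mathrm d x$ from $\tfrac12(2-\eps)$.

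To reach the pure $\rho_\psi^2$-form of the ``in particular'' part, I would keep $\gamma$ free and optimize. Writing $A=\int_{\mathbb R}\rho_\psi^2\,\mathrm d x$ and $N=\int_{\mathbb R}\rho_\psi\,\mathrm d x$, the right-hand side of Eq.~\eqref{eq:lem2} is a single-variable function of $\gamma$ whose minimizer is $\gamma_\star=N/A$; substituting this value back collapses the two moments into one interpolation-type term proportional to $N^{\eps}A^{1-\eps}$, the cross contributions combining into the factor $\tfrac{2-\eps}{2\eps}$. A weighted arithmetic--geometric mean (Young) inequality with exponents $\eps$ and $1-\eps$, namely $N^{\eps}A^{1-\eps}\le \eps N+(1-\eps)A$, then splits this interpolation term back into a multiple of $A$ plus a multiple of $N$; this both reproduces the two-term bound and exhibits a constant $C_\eps$ with the conjectured $\rho_\psi^2$-scaling that behaves like $\eps^{-1}$ and hence diverges as $\eps\to 0_+$.

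The computations are routine; the one place that genuinely needs care is the linear term $\int_{\mathbb R}\rho_\psi\,\mathrm d x=N$. There is no universal estimate bounding $\int_{\mathbb R}\rho_\psi\,\mathrm d x$ by $\int_{\mathbb R}\rho_\psi^2\,\mathrm d x$ — a density can be spread out so that $\|\rho_\psi\|_1$ stays fixed while $\|\rho_\psi\|_2\to 0$ — so the clean $-C_\eps\int_{\mathbb R}\rho_\psi^2\,\mathrm d x$ statement should be understood as arising from the optimized choice $\gamma_\star=N/A$ together with the Young step, and the resulting constant carries a dependence on the particle number through that optimization. Consequently the main effort goes into verifying that $\gamma_\star$ is indeed the minimizer (a short convexity check in $\gamma$) and into tracking the constants through the Young inequality, which is where I would be most careful.
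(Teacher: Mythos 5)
Your proof of the displayed two-term inequality is the same as the paper's: apply Lemma~\ref{th:2} to $v=r^{\eps-1}$ with the constant choice $\gamma=1$ and evaluate the two power-law moments, which is legitimate since $v''=(\eps-1)(\eps-2)r^{\eps-3}>0$ and $v\to0$ at infinity. Your moment evaluations are correct, and in fact they expose a slip in the paper: $\tfrac12\int_0^1 v''r^2\,\mathrm dr=\tfrac12\,(\eps-1)(\eps-2)/\eps=\tfrac1\eps+\tfrac\eps2-\tfrac32$, whereas the theorem and the paper's proof record this coefficient as $\tfrac1\eps+\eps-3$ (which is even negative for $\eps\gtrsim0.38$); your value is the one the computation actually delivers, and it still yields the qualitative conclusion. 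For the ``in particular'' part the two arguments genuinely diverge. The paper keeps $\gamma=1$ and absorbs the linear term $\int_\R\rho_\psi\,\mathrm dx=N$ into $\int_\R\rho_\psi^2\,\mathrm dx$ by choosing a finite interval $B_\eps$ carrying all but $N\eps/2$ of the mass and applying Cauchy--Schwarz on it, ending with $C_\eps=|B_\eps|+(\eps-3+1/\eps)$ --- a constant that depends on $\rho_\psi$ through the non-constructive length $|B_\eps|$. You instead optimize Lemma~\ref{th:2} over $\gamma$, correctly finding the minimizer $\gamma_\star=N/A$ with $A=\int_\R\rho_\psi^2\,\mathrm dx$ and the explicit interpolation bound $I_\mathrm{xc}(\psi)\ge-\tfrac{2-\eps}{2\eps}N^\eps A^{1-\eps}$; you also correctly flag that $N$ cannot be bounded universally by $A$, so the resulting $C_\eps=\tfrac{2-\eps}{2\eps}(N/A)^\eps$ is again density-dependent. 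Both routes are valid and both deliver only a $\psi$-dependent constant (the theorem's phrasing hides this in either case); yours has the advantage of an explicit closed form, the paper's of being a one-line appeal to Cauchy--Schwarz. One small caution on your final step: the Young inequality $N^\eps A^{1-\eps}\le\eps N+(1-\eps)A$ simply reintroduces the linear term, so it does not by itself produce the pure $\rho_\psi^2$ form --- the honest conclusion, which you in fact draw, is that the $N^\eps A^{1-\eps}$ bound itself must be read as $-C_\eps A$ with $C_\eps$ carrying the factor $(N/A)^\eps$.
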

\begin{proof} 
Note that the homogeneous potential is convex and $\lim_{r\to \infty}v(r)=0$. 
Hence, Lemma~\ref{th:2} is applicable and the choice $\gamma  =1$ gives
\begin{equation*}
\begin{aligned}
\frac{1}{2} \int_{\R}\rho_\psi^2 \mathrm{d}x \int_0^1 v'' r^2\mathrm{d}r 
=
\left( \frac 1 \eps + \eps - 3\right) \int_{\mathbb R} \rho_\psi^2\mathrm{d}x,
\end{aligned}
\end{equation*}
and 
\begin{equation*}
\begin{aligned}
\frac{1}{2}\int_\R \rho_\psi \mathrm d x \int_1^\infty v'' r\mathrm{d}r
&= \left(1-\frac \eps 2 \right)\int_\R \rho_\psi\mathrm{d}x .
\end{aligned}
\end{equation*}
Since $\rho_\psi \in L^1(\mathbb R)$, we find for any $\eps > 0$ an interval $B_\eps$ such that
$\int_\R \rho_\psi \mathrm{d}x \leq \int_{B_\eps} \rho_\psi \mathrm{d}x + N\eps/2$. 
Since $N \geq 2$, we have 
\[
\left(1- \frac \eps 2 \right)\int_\R \rho_\psi\mathrm{d}x > 1
\]
and 
\begin{equation*}
\left(1- \frac\eps 2 \right)\int_\R \rho_\psi\mathrm{d}x 
\leq 
\left(\int_{B_\eps} \rho_\psi \mathrm{d}x \right)^2
\leq 
\vert B_\eps\vert \int_\R \rho_\psi^2 \mathrm{d}x.
\end{equation*}
Defining $C_\epsilon = \vert B_\eps\vert + (\epsilon -3 +1/\epsilon)$ concludes the proof.
\end{proof}
\begin{remark}
As is evident from the proof, the conjectured 
bound for the case $v=r^{\eps-1}$ is not the optimal formulation since for large particle number $\int_\R \rho_\psi \mathrm d x$ is significantly less than $\left(\int_\R \rho_\psi \mathrm{d} x \right)^2$.
\end{remark}

We finalize by commenting on the result in the previous section. 
The Lieb--Oxford bounds established for the convex soft Coulomb potential ($\tilde v_\eps$) and regularized Coulomb potential ($v_\beta$) are not of the conjectured form $I_\mathrm{xc}(\psi) \geq - C_1\int_\R \rho_\psi^2 \mathrm{d}x$.
However, using that $\ln(1 + x) \leq x$ for $x\geq 0$, Theorem~\ref{thm:csc} yields with $\alpha = cN/(c_1c_2)>0$ 
\begin{equation}
\label{eq:LiftedLO}
\begin{aligned}
I_\mathrm{xc}(\psi) 
&\geq 
-  \frac{c_1c_2c_3}{2c} \int_\R\rho_\psi^2 \mathrm{d}x - \frac 1 2  cN .
\end{aligned}
\end{equation}
For the regularized Coulomb potential, where we can take $c_1=c_3 = 4$ and
$c_2(\beta) = \sqrt{\pi}/(4\beta)$, we have 
\begin{align}\label{eq:Ndep1}
     I_\mathrm{xc}(\psi) + \frac 1 2  cN 
&\geq 
-  \frac{2\sqrt{\pi}}{c \beta} \int_\R\rho_\psi^2 \mathrm{d}x,
\end{align}
and similarly for the convex soft Coulomb potential
\begin{align}\label{eq:Ndep2}
      I_\mathrm{xc}(\psi) + \frac 1 2  cN 
&\geq 
-  \frac{2\sqrt{2}}{c \eps} \int_\R\rho_\psi^2 \mathrm{d}x  .
\end{align}
Generally, if the interaction potential is shifted down with a positive constant $v\to v' = v  - c$, 
then the shifted interaction energy, $I_\mathrm{xc}'$, by definition satisfies
$I_\mathrm{xc}' = I_\mathrm{xc}   +  cN/2$. 
In particular, for the regularized Coulomb potential the choice 
$c=v_\beta(0)= \sqrt{\pi}/(2\beta) =:c_\beta $ gives 
$I_\mathrm{xc}'(\psi) \geq -4 \int_\R\rho_\psi^2 \mathrm{d}x$, although the indirect interaction energy $I_\mathrm{xc}'$ does not come from a repulsive potential anymore since $v_\beta':=v_\beta - c_\beta\leq 0$.

\section{Conclusion}

In this article, we have discussed and presented one-dimensional Lieb--Oxford-type inequalities for different Coulomb-like potentials. 
Due to the strong singular nature of the Coulomb potential in one dimension, different choices of pseudopotentials have been employed and investigated. 
Although we were able to derive Lieb--Oxford bounds for a variety of Coulomb-like potentials, we have not found a general or parameter-independent bound. (The contact potential gives the constant $1/2$ but is not reproduced using the other potentials.)
Our strategy follows the general framework of Hainzl and Seringer~\cite{Hainzl},
although an alternative approach (see Lemma~\ref{th:2}) that does not use the Hardy--Littlewood maximal function has been established and subsequently used.
We have focused on potentials that approximate the Coulomb potential and that have been previously addressed in the literature. 
In particular, R\"as\"anen and coauthors~\cite{Rasanen,Rasanen2011} have studied Lieb--Oxford bounds for different interaction potentials.
Reference~\citenum{Rasanen} conjectured that the Lieb--Oxford bound for a one-dimensional system takes the form $I_\mathrm{xc}(\psi)\geq -C_1\int_\R \rho_\psi^2\mathrm{d}x $. 
For the contact potential this can be directly established.
We proved modified results for the soft and the regularized Coulomb potentials (see Section~\ref{sec:ConvexAndSoftCoulomb}) that also involve logarithmic terms of the particle density. %
For the regularized Coulomb potential, the proven Lieb--Oxford inequality in Theorem~\ref{thm:csc} is, to the best of our knowledge, the first known explicit bound for this type of interaction.
In Section~\ref{sec:con}, we additionally investigated to what degree the conjectured bound could be established 
for approximate contact potentials, homogeneous potentials, soft Coulomb and regularized Coulomb potentials. This typically involves unbounded (Lieb--Oxford) constants.
In particular, by applying Lemma~\ref{th:2} we derived a Lieb--Oxford bound for the homogeneous potential $v= r^{\eps-1}$ 
of the conjectured form but with unbounded constant (as $\eps\to 0+$). Using the same lemma, similar results are 
discussed for (convex) soft and regularized Coulomb potentials.

\section*{Data Availability}
Data sharing is not applicable to this article as no new data were created or analyzed in this study.

\appendix

\section{Lieb--Oxford bound for the one-dimensional Hubbard model}

The Hubbard model is used describe the transition between conducting and insulating systems~\cite{hubbard1963electron, altland2010condensed,herring1966magnetism}.
In the one-dimensional case, this can be identified by Hydrogen chains~\cite{essler2005one}. 
It is interesting to note that an expression similar to the conjectured bound in Eq.~\eqref{eq:post} (for the Schr\"odinger model) can be derived for the homogeneous one-dimensional Hubbard model.
The theoretical framework described in Ref.~\citenum{capelle2003density} provides the necessary foundation for the following derivation.

We recall that the Hubbard Hamiltonian in second quantization is
\begin{equation*}
\begin{aligned}
\hat H
&=-t\sum _{\sigma \in \{\uparrow, \downarrow\} }\sum_{i =1}^L
\left(
c_{i,\sigma}^{\dagger } c_{i+1,\sigma }+c_{i+1,\sigma }^{\dagger }c_{i,\sigma }
\right)\\
&\quad+U\sum _{i=1}^L c_{i\uparrow }^\dagger c_{i\uparrow }c_{i\downarrow }^\dagger c_{i\downarrow },
\end{aligned}
\end{equation*}
where $L>0$ is the number of lattice sites, $t>0$ is the hopping integral and $U \geq 0$ is the strength of the on-site interaction that represents the electron repulsion.
In order to describe the exchange-correlation energy, we use an interpolation formula for the ground-state energy of the homogeneous Hubbard model based on the Bethe ansatz~\cite{lieb1994absence} (see e.g. Ref.~\citenum{capelle2003density} for more details).

\begin{figure}
    \centering
    \includegraphics[width = 0.48\textwidth]{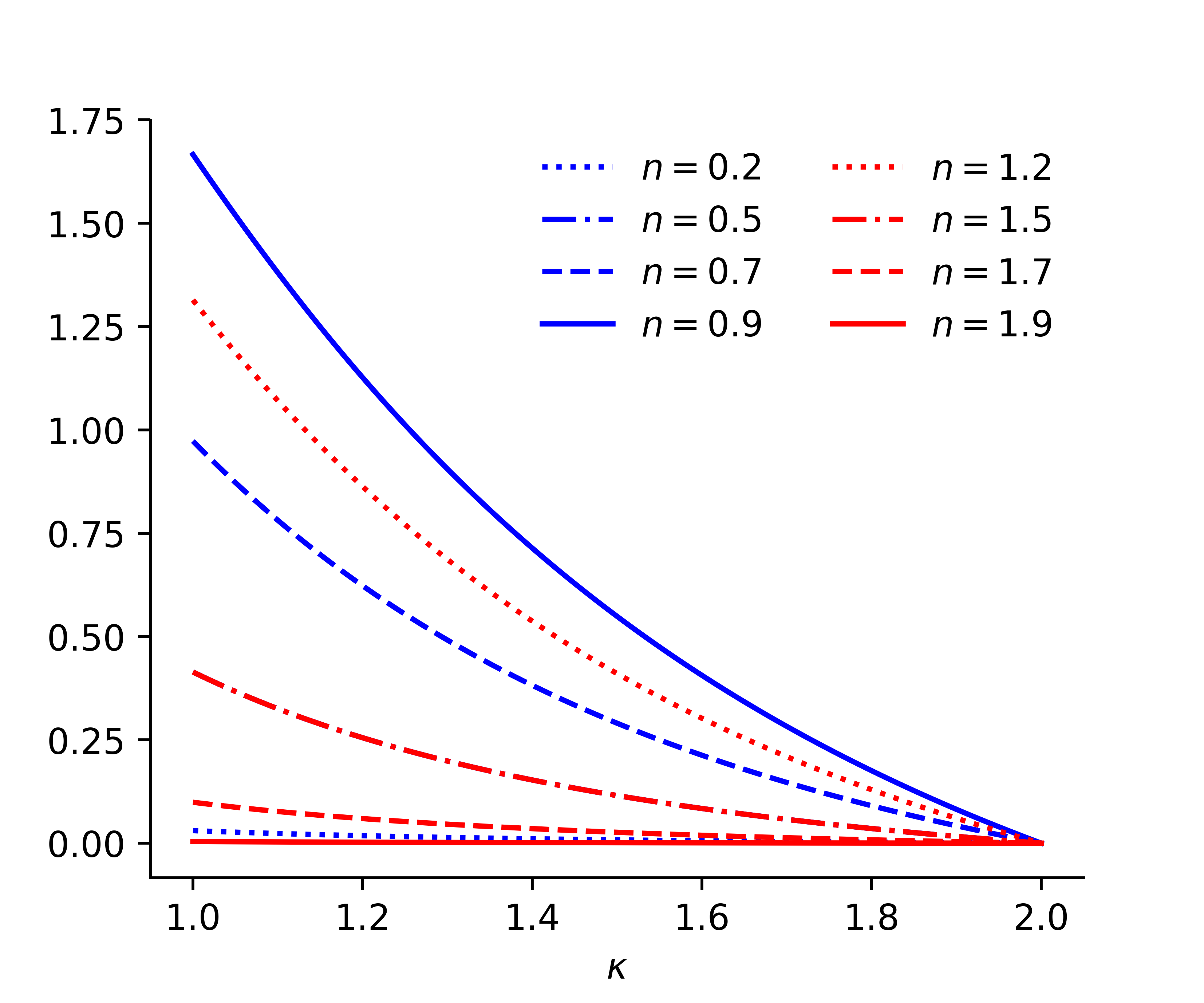}
    \caption{Graphical representation of the the difference $e(n,t,U) - e(n,t,0)$. The blue curves describe the difference for $n\leq 1$ whereas the red curves the difference for $n>1$.}
    \label{fig:F_kappa}
\end{figure}

Considering two different cases for the band filling, i.e., a less than half-filled electronic band ($n = N/L \leq 1$) and a more than half-filled band ($1 < n = N/L \leq 2$), the energy expression reads~\cite{capelle2003density}
\begin{equation*}
e(n,t,U) = 
\left\lbrace 
\begin{aligned}
  - & \frac{2t\beta(U/t)}{\pi} \, \sin 
\left( 
\frac{\pi}{\beta(U/t)}n
\right)  , &&   0 \leq n \leq 1, \\
&  e(2-n,t,U) + U(n-1), && 1 < n \leq 2,
\end{aligned}
\right.
\end{equation*}
where $n = N/L$ describes the band filling and $\beta\in [1,2]$ is a function of the ratio $U/t$.
The exchange-correlation energy is then given by
\begin{equation}\label{eq:A2}
e_{\rm xc}(n,t,U)
=
e(n,t,U) - e(n,t,0) -e_{\rm H}(n,U),
\end{equation}
where $e_{\rm H}$ is the Hartree energy, i.e., $e_{\rm H}(n,U) = U n^2/4$.
Note that $e(n,t,0)$ describes the noninteracting kinetic energy.

\begin{proposition}
Let $t>0$ and $U\geq 0$ be fixed but arbitrary. Then $E_\mathrm{ex}^\mathrm{1DHM}(n,t,U) \geq -\frac{U}{4} \sum_i n_i^2$. 
\end{proposition}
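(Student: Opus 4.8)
The plan is to reduce the stated bound to the single transparent fact that the repulsive on-site interaction never lowers the ground-state energy below its non-interacting value, namely $e(n,t,U)\ge e(n,t,0)$. Since the Hartree term is $e_{\rm H}(n,U)=Un^2/4$, the definition in Eq.~\eqref{eq:A2} gives $e_{\rm xc}(n,t,U)+\tfrac{U}{4}n^2=e(n,t,U)-e(n,t,0)$, so the per-site inequality $e_{\rm xc}(n,t,U)\ge-\tfrac{U}{4}n^2$ is exactly equivalent to $e(n,t,U)\ge e(n,t,0)$. Summing this local estimate over the lattice sites then produces $E_\mathrm{ex}^\mathrm{1DHM}\ge-\tfrac{U}{4}\sum_i n_i^2$, and it therefore suffices to establish $e(n,t,U)\ge e(n,t,0)$.

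First I would handle the regime $0\le n\le1$, where both energies come from the interpolation formula: $e(n,t,0)$ corresponds to the non-interacting value $\beta(0)=2$ and $e(n,t,U)$ to some $\beta=\beta(U/t)\in[1,2]$. Inserting these, the inequality to prove becomes
\begin{equation*}
\beta\,\sin\!\left(\frac{\pi n}{\beta}\right)\le 2\,\sin\!\left(\frac{\pi n}{2}\right),
\end{equation*}
which I would obtain by showing that $f(\beta):=\beta\sin(\pi n/\beta)$ is nondecreasing on $[1,2]$ and then using $\beta\le2$.

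The monotonicity of $f$ is the crux of the argument and the step I expect to demand the most care. A direct differentiation gives $f'(\beta)=\sin u-u\cos u$ with $u=\pi n/\beta$, and for $n\in[0,1]$, $\beta\in[1,2]$ one has $u\in[0,\pi]$. Thus it is enough to verify that $g(u):=\sin u-u\cos u\ge0$ on $[0,\pi]$; this holds because $g(0)=0$ and $g'(u)=u\sin u\ge0$ there, so $g$ stays nonnegative. Consequently $f'\ge0$, hence $f(\beta)\le f(2)$, and the desired inequality for $0\le n\le1$ follows.

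Finally, for $1<n\le2$ I would invoke the particle--hole structure of the formula, $e(n,t,U)=e(2-n,t,U)+U(n-1)$ together with $e(n,t,0)=e(2-n,t,0)$. Subtracting yields $e(n,t,U)-e(n,t,0)=[e(2-n,t,U)-e(2-n,t,0)]+U(n-1)$, where the bracket is nonnegative by the case already settled (as $2-n\in[0,1)$) and $U(n-1)\ge0$ since $U\ge0$ and $n>1$. This closes the reduction and completes the proof.
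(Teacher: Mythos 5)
Your proposal is correct and follows essentially the same route as the paper: reduce to $e(n,t,U)\ge e(n,t,0)$, prove it for $0\le n\le 1$ by showing $\beta\mapsto\beta\sin(\pi n/\beta)$ is monotone on $[1,2]$ (the paper phrases this as $f_n'(\kappa)\le 0$ with $f_n(2)=0$, using the $\tan$ inequality where you use $g(u)=\sin u-u\cos u$, $g'(u)=u\sin u\ge0$), and handle $1<n\le2$ via the particle--hole relation before summing over sites. Your derivative argument is a slightly cleaner verification of the same monotonicity fact, and your explicit treatment of the extra $U(n-1)\ge0$ term in the $n>1$ case spells out what the paper leaves implicit.
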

\begin{proof}
We will first demonstrate that $e(n,t,U) - e(n,t,0)\geq 0$.  
Let $0\leq n\leq1$, then 
\begin{align*}
    &e(n,t,U) - e(n,t,0)\\
    & \quad =\frac{2t }{\pi} \left[ 2 \sin\left(  \frac{\pi n}{2} \right)  
    - \beta(U/t) \sin\left( \frac{\pi n}{\beta(U/t)}  \right)\right]  \\
    & \quad =:\frac{2t }{\pi} f_n(\kappa),  
\end{align*}
where $\kappa = \beta(U/t)$. 
We claim that $f_n(\kappa)$ is a positive function on the interval $[1,2]$, see Fig.~\ref{fig:F_kappa}. To show this, we 
just have to note that $f_n'(\kappa) \leq 0$ for $\kappa\in[1,2]$ (using e.g. that $x\leq \tan(x)$ on $[0,\pi/2]$ and that the reverse inequality holds on $[\pi/2,\pi]$) and that (by sine angle addition identity) 
\begin{align*}
    f_n(1) &=  \sin\left(  \frac{\pi n}{2} \right)    
     -  \sin\left( \pi n  \right)  \\
     & =  \sin\left(  \frac{\pi n}{2} \right) - 2 \cos\left(  \frac{\pi n}{2} \right)\sin\left(  \frac{\pi n}{2} \right)  
     \geq 0 = f_n(2).
\end{align*}
For $n>1$ (but less or equal to two), we repeat the above but with $m=2-n$ instead of $n$.
To complete the proof, we use Eq.~\eqref{eq:A2} and that $f_{n}\geq 0$ such that
\begin{align*}
    E_{\rm xc}^\mathrm{1DHM}(n,t,U) &= \sum_{i=1}^L
\left( \frac{2t}{\pi}f_{n_i}(\kappa) -e_{\rm H}(n_i,U) \right) \\
&\geq -\frac{U}{4} \sum_i n_i^2.
\end{align*}
\end{proof}
We note that in case of the homogeneous Hubbard model the lower bound of the indirect interaction energy in terms of the single-particle density takes a discrete form characterized by the site-occupation number $n_i$, i.e., the expectation value of $c_{i\uparrow }^\dagger c_{i\uparrow } + c_{i\downarrow }^\dagger c_{i\downarrow}$.

\bibliography{refs}

\end{document}